\algnewcommand\algorithmicswitch{\textbf{switch}}
\algnewcommand\algorithmiccase{\textbf{case}}
\algnewcommand\algorithmicassert{\texttt{assert}}
\algnewcommand\Assert[1]{\State \algorithmicassert(#1)}%
\newtheorem{definition}{Definition}
\newtheorem{theorem}{Theorem}
\newtheorem{property}{Property}
\newtheorem{corollary}{Corollary}
\newtheorem*{problem}{Problem\textrm{ }Statement}
\newtcolorbox{myboxi}[1][]{
  breakable,
  title=#1,
  colback=gray!10!white,
  colbacktitle=white,
  coltitle=black,
  fonttitle=\bfseries,
  bottomrule=0.5pt,
  toprule=0.5pt,
  leftrule=0.5pt,
  rightrule=0.5pt,
  titlerule=0pt,
  colframe=black,
  boxsep=1pt,left=2pt,right=2pt,top=2pt,bottom=2pt
}
\definecolor{codegreen}{rgb}{0,0.6,0}
\definecolor{codegray}{rgb}{0.5,0.5,0.5}
\definecolor{codepurple}{rgb}{0.58,0,0.82}
\definecolor{backcolour}{rgb}{0.95,0.95,0.92}
\lstdefinestyle{mystyle}{
  backgroundcolor=\color{backcolour},   
  commentstyle=\color{codegreen},
  keywordstyle=\color{magenta},
  numberstyle=\tiny\color{codegray},
  stringstyle=\color{codepurple},
  basicstyle=\ttfamily\footnotesize,
  breakatwhitespace=false,         
  breaklines=true,                 
  captionpos=b,                    
  keepspaces=true,                 
  numbersep=5pt,                  
  showspaces=false,                
  showstringspaces=false,
  showtabs=false,                  
  tabsize=2
}
\newcounter{enum}
\newenvironment{packed_enum}{
\begin{list}{\textbf{(\arabic{enum})}}{
  \setlength{\itemsep}{0pt}
  \setlength{\parskip}{0pt}
  \setlength{\labelwidth}{-5 pt}
  \setlength{\leftmargin}{0 pt}
  \setlength{\itemindent}{0pt}
  \setlength{\topsep}{0pt}
  \usecounter{enum}}
}{\end{list}}
\newcommand\vldbavailabilityurl{URL_TO_YOUR_ARTIFACTS}
\newcommand\vldbpagestyle{plain} 
\newcommand{\header}[1]{\vspace{1mm}\noindent{\bfseries{#1.}}}
\newcommand{\wharf}[0]{\textsc{Wharf}\xspace}
\newcounter{MakisNOC}
\newcounter{KaustubhNOC}
\newcounter{ZoiNOC}
\newcounter{JorgeNOC}
\newcounter{EleniNOC}
\newcounter{VolkerNOC}
\begin{document}
\title{Space-Efficient Random Walks on Streaming Graphs}






\author{Serafeim Papadias \ \ Zoi Kaoudi \ \ Jorge-Arnulfo Quian\'e-Ruiz \ \ Volker Markl}
\affiliation{\{s.papadias, zoi.kaoudi, jorge.quiane, volker.markl\}@tu-berlin.de}
\affiliation{
    \institution{Technische Universit\"at Berlin}
}

\begin{abstract}
Graphs in many applications, such as social networks and IoT, are inherently streaming, involving continuous additions and deletions of vertices and edges at high rates. 
Constructing random walks in a graph, i.e.,~sequences of vertices selected with a specific probability distribution, is a prominent task in many of these graph applications as well as machine learning (ML) on graph-structured data. 
In a streaming scenario, random walks need to constantly keep up with the graph updates to avoid stale walks and thus, performance degradation in the downstream tasks. 
We present \wharf, a system that efficiently stores and updates random walks on streaming graphs. It avoids a potential size explosion by maintaining a compressed, high-throughput, and low-latency data structure.   
It achieves (i)~the succinct representation by coupling compressed purely functional binary trees and pairing functions for storing the walks, and (ii)~efficient walk updates by effectively pruning the walk search space. 
We evaluate \wharf, with real and synthetic graphs, in terms of throughput and latency when updating random walks.
The results show the high superiority of \wharf over inverted index- and tree-based baselines.

\end{abstract}

\maketitle

\pagestyle{\vldbpagestyle}



\ifdefempty{\vldbavailabilityurl}{}{
\vspace{.3cm}
\begingroup\small\noindent\raggedright\textbf{Artifact Availability:}\\
The source code, data, and/or other artifacts have been made available at \url{https://github.com/spapadias/wharf}.
\endgroup 
}

\sloppypar


\section{Introduction}
\label{sec:introduction}
Random walks are used in a large number of graph analysis tasks, such as PageRank~\cite{www99-pagerank, intmath05-perpagerank, vldb10-goel, cikm21-agent}, SimRank~\cite{vldb17-reads}, in influence maximization~\cite{ijcai15-influence, kdd15-influence, kdd10-influence, sigmod18-xiaokui}, in recommendations~\cite{www14companion-rw-in-recsys, kdd15-trustwalk, wsdm22-swalk}, in graph embeddings~\cite{kdd14-deepwalk, kdd16-node2vec}, and graph neural networks~\cite{arxiv20-agl}. For example, random walks-based graph embeddings enable many machine learning (ML) tasks on graphs, e.g.,~link prediction, vertex classification, and outlier detection. 
Thus, computing random walks is at the core of many important tasks today.

Yet, real-world graphs are inherently dynamic, entailing continuous additions and deletions of vertices and edges~\cite{graphChi,arxiv19-vasia,icde15-llama}. 
In many novel applications, such as the Internet of Things and digital twins, graph updates occur with increasingly high frequency, requiring low latency and high throughput processing. For example, Alibaba's e-commerce platform uses massive graphs to store their data~\cite{icde19-keynote-alibaba}:
These graphs consist of billions of vertices (e.g.,~modelling products, buyers, and sellers) and hundred billions of edges (e.g.,~representing clicks, orders, and payments). They use these graphs mainly for link prediction and fraud detection. Alibaba reported these graphs are highly dynamic as they receive a high rate of real-time updates.

\begin{figure}[t]
  \begin{subfigure}[t]{0.23\textwidth}  
    \includegraphics[width=\textwidth]{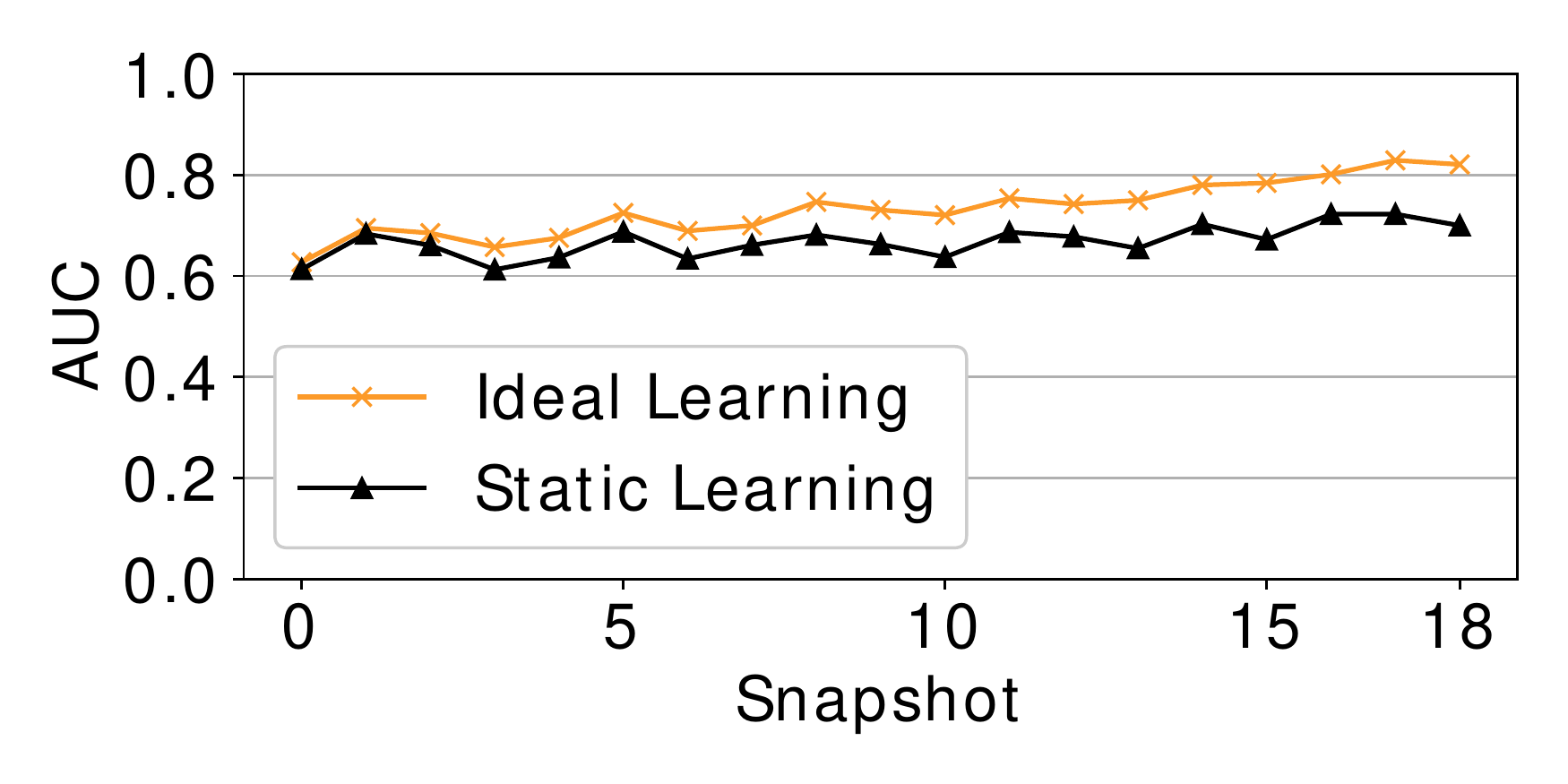}
    \caption{Graph Embeddings}
    \label{fig:motivation-ge}
  \end{subfigure}
    \begin{subfigure}[t]{0.23\textwidth}  
    \includegraphics[width=\textwidth]{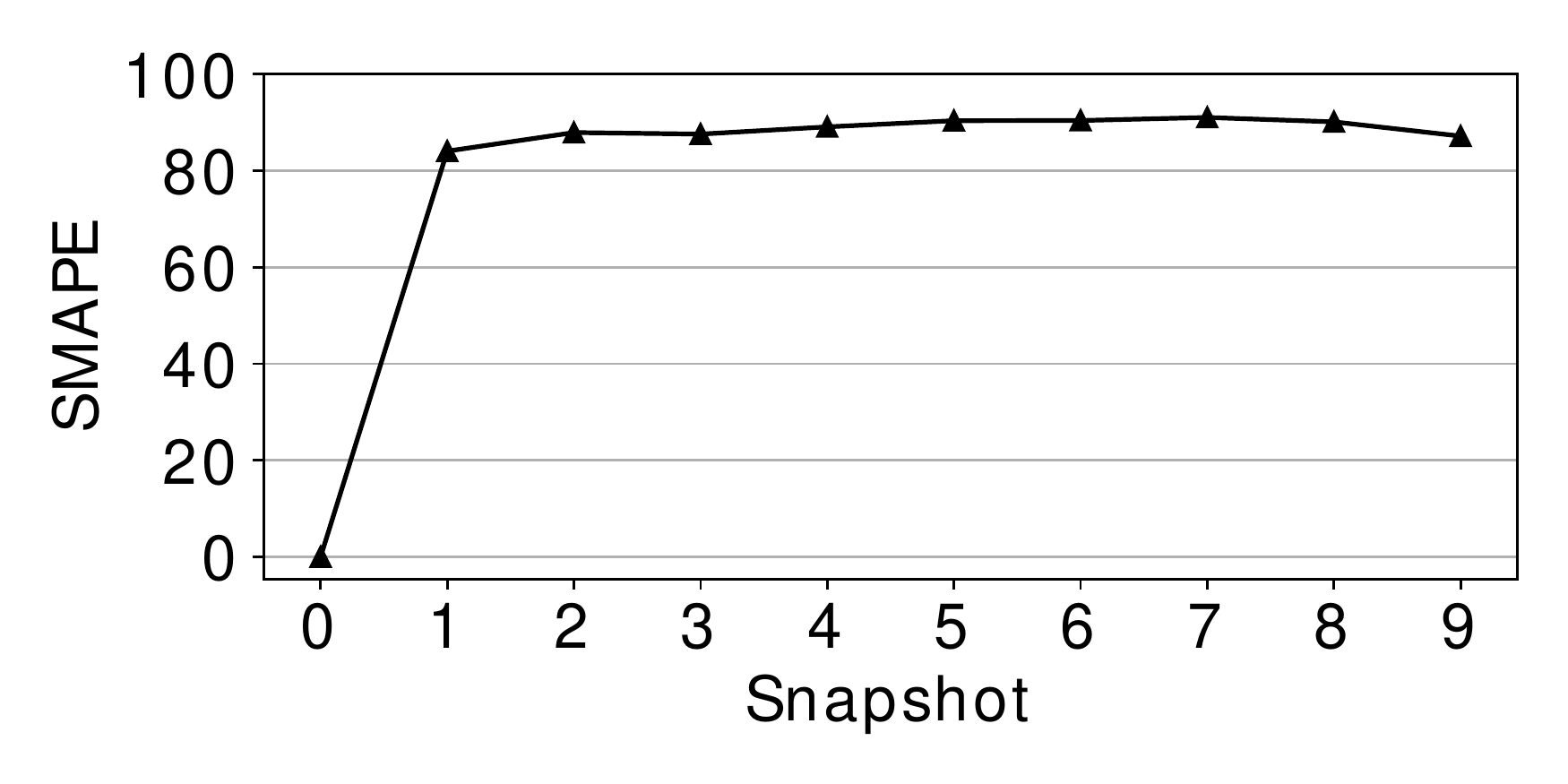}
    \caption{(Personalized) PageRank}  
    \label{fig:motivation-ppr}
  \end{subfigure} 
  \vspace{-0.2cm}
  \caption{Applications of streaming random walks.}
  \label{fig:motivation-introduction} 
  \vspace{-0.4cm}
\end{figure}

Thus, it is crucial to keep random walks up-to-date with the continuous changes to not hurt the effectiveness or accuracy of the downstream tasks. This is exacerbated in high-stake applications, such as anomaly and fraud detection, where even a small percentage of higher accuracy is of utter importance.
Let us illustrate how the accuracy is affected if random walks are not kept up-to-date.
We ran an experiment where we consider a link prediction task on 
a dynamic social network graph after running node2vec~\cite{kdd16-node2vec}, a popular graph embedding algorithm that uses random walks. 
We executed node2vec in two different settings:
static -- we train embeddings only on the initial graph and reuse them subsequently, and ideal -- we retrain embeddings from scratch at each snapshot.\footnote{We provide the detailed setup of this experiment in Section~\ref{subsec:accuracy}.}
Figure~\ref{fig:motivation-ge}
shows the accuracy (i.e.,~AUC score) results.
We observe that retraining embeddings from scratch at each new graph snapshot, i.e.,~after a set of new graph updates have been applied, is mandatory to maintain high accuracy in the downstream ML task (Ideal Learning in the figure).
While in the static scenario, the accuracy drops, in the dynamic scenario the accuracy increases as more graph updates arrive. 
We also ran an experiment where we approximate Personalized PageRank (PPR) scores using~\cite{vldb10-goel} on a dynamic citation graph.  
Figure~\ref{fig:motivation-ppr} shows the Symmetric Mean Absolute Percentage Error (SMAPE) when approximating PPR scores.  
Specifically, we illustrate the SMAPE between the actual algorithm in~\cite{vldb10-goel}, which updates all affected random walks at each snapshot,
and a -- static -- variant, 
which uses the existing random walks.  
We observe that 
the error in PPR scores is above $80\%$ even after the first snapshot arrives. 
We thus expect 
that both the accuracy gap in graph embeddings applications and the estimation error in PPR applications will increase very fast in {\em streaming graphs}, where updates arrive at a very fast rate~\cite{pldi19-aspen,sigmod20-tamer-anil, vldb18-fraud-alibaba, icde19-keynote-alibaba, edbt15-motivation}. 

Despite this importance, the research community has paid little attention to the problem of maintaining up-to-date random walks on streaming (a.k.a,~highly dynamic) graphs.  
We do find a large number of works for efficiently computing random walks~\cite{sosp19-knightking, recsys13-drunkardmob, sigmod20-lei, atc20-graphwalker, icde21-uninet, sc20-csaw}, but all consider static graphs.
Barros et al.~\cite{survey21-barros} present a variety of random walk-based works on Graph Representation Learning (GRL) on dynamic graphs~\cite{csrea19-mitrovic, asonam18-winter, ijcnn19-zhou, ijcai19-tnodeembed, cnta19-evonrl, arxiv19-incremental-node2vec, bigdata18-madhavi, www18-ctdne, cikm20-tdgraphembed}.
Some among those, such as~\cite{arxiv19-incremental-node2vec, cnta19-evonrl}, consider updating random walks but use simplistic inverted indexes to 
do so
and are, thus, inefficient -- they cannot cope with streaming graphs.
There are also theoretical works, such as~\cite{itcs19-jin, soda20-store-a-walk}, that focus on
random walks: \cite{soda20-store-a-walk} studies how to store walks succinctly in an append-only fashion,
which is not applicable for the streaming scenario 
where parts of walks have to be deleted; \cite{itcs19-jin} proposes generating random walks on single-pass graph streams but in an approximate manner.

Updating random walks for streaming graphs is thus an important and open problem. However, doing so is challenging for three main reasons:
(i)~One should update random walks with both low-latency and high-throughput as streams can become quite bursting and volatile with sudden spikes~\cite{icde18-jeyhun}; 
(ii)~One must enable fast access to (parts of) the walks state. 
This allows for realizing fast walk updates by efficiently identifying the walks as well as their parts (vertices) to update;
(iii)~Random walks state should be as succinct as possible especially for applications where the total size of random walks is multiple times larger than the size of the maintained graph, e.g.,~Graph Representation Learning (GRL)~\cite{sigmod20-lei}.

We propose \wharf, a parallel system that tackles all above-mentioned challenges to maintain stateful streaming random walks. 
It stores random walks with the graph within a single data structure forming a hybrid tree-of-trees.
The main idea is to update walks together with the graph:
During a graph update, it identifies the out-of-date walks and updates them in a bulk fashion.

In summary, after giving some preliminaries in Section~\ref{sec:running-example-and-background}, we make the following major contributions: 
\begin{packed_enum}
	\item We formalize the problem of \textit{streaming random walks}, which entails storing walks in a space-efficient way, enabling efficient batch walk updates with high-throughput and low-latency, and fast node retrieval in the set of stored walks simultaneously (Section~\ref{sec:problem-statement}). 
	

\item We propose a novel \textit{hybrid-tree} that stores random walks together with the graph in a compressed form.
 Specifically, we represent random walks as triplets which we encode to integer values using pairing functions. This allows us to
 compact random walks in a lossless manner and maintain the walks state in a way that also serves as an index for efficient walk access.
 Overall, our structure allows for safe parallelism, fast acquisition of lightweight graph and walks snapshots, and high cache locality.
 
	
	
	\item We devise an output-sensitive algorithm for performing efficient search in the set of walks by leveraging the ordering properties of pairing functions (Section~\ref{sec:optimized-search}).
	We also present a walk update mechanism that apply updates in batches and we prove that its time complexity is lower than its competitor
	(Section~\ref{sec:update-random-walks}).
	
	
	\item We validate \wharf through extensive experiments on a variety of real-world and synthetic graph workloads.
	The results show that \wharf achieves its goals in terms of throughput and latency when updating random walks,
	low memory footprint, and effectiveness of the downstream tasks (Section~\ref{sec:experiments}).
\end{packed_enum}

We then discuss related work in Section~\ref{sec:related-work} where we stress that existing non-streaming random walk systems fail to address the above-mentioned challenges. 
We conclude the paper in Section~\ref{sec:conclusion}.

\section{Preliminaries}
\label{sec:running-example-and-background}




We start by providing a running example that we use throughout the paper.
We, then, briefly explain purely-functional trees and the $C$-tree data structure, which our random walk structure leverages, and introduce the pairing functions that are crucial for encoding and porting random walks 
successfully into our random walk structure.

\begin{figure}[t]
	\mbox{
		\begin{subfigure}[t]{0.29 \columnwidth}
			\includegraphics[width=\textwidth]{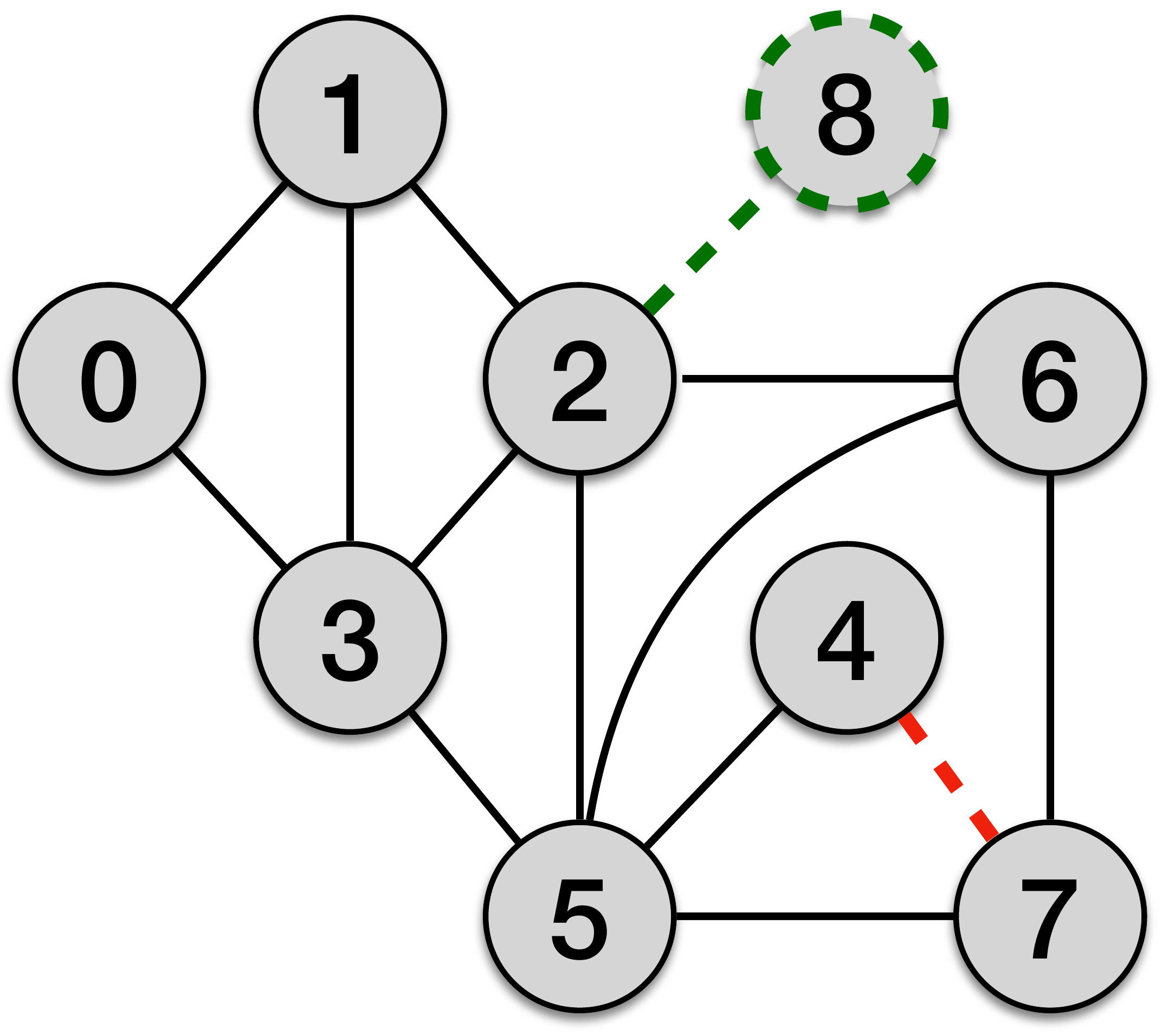}    
			\caption{Example graph}
			\label{fig:ecommerce-graph}
		\end{subfigure}
		\begin{subfigure}[t]{0.36 \columnwidth}
			\includegraphics[width=\textwidth]{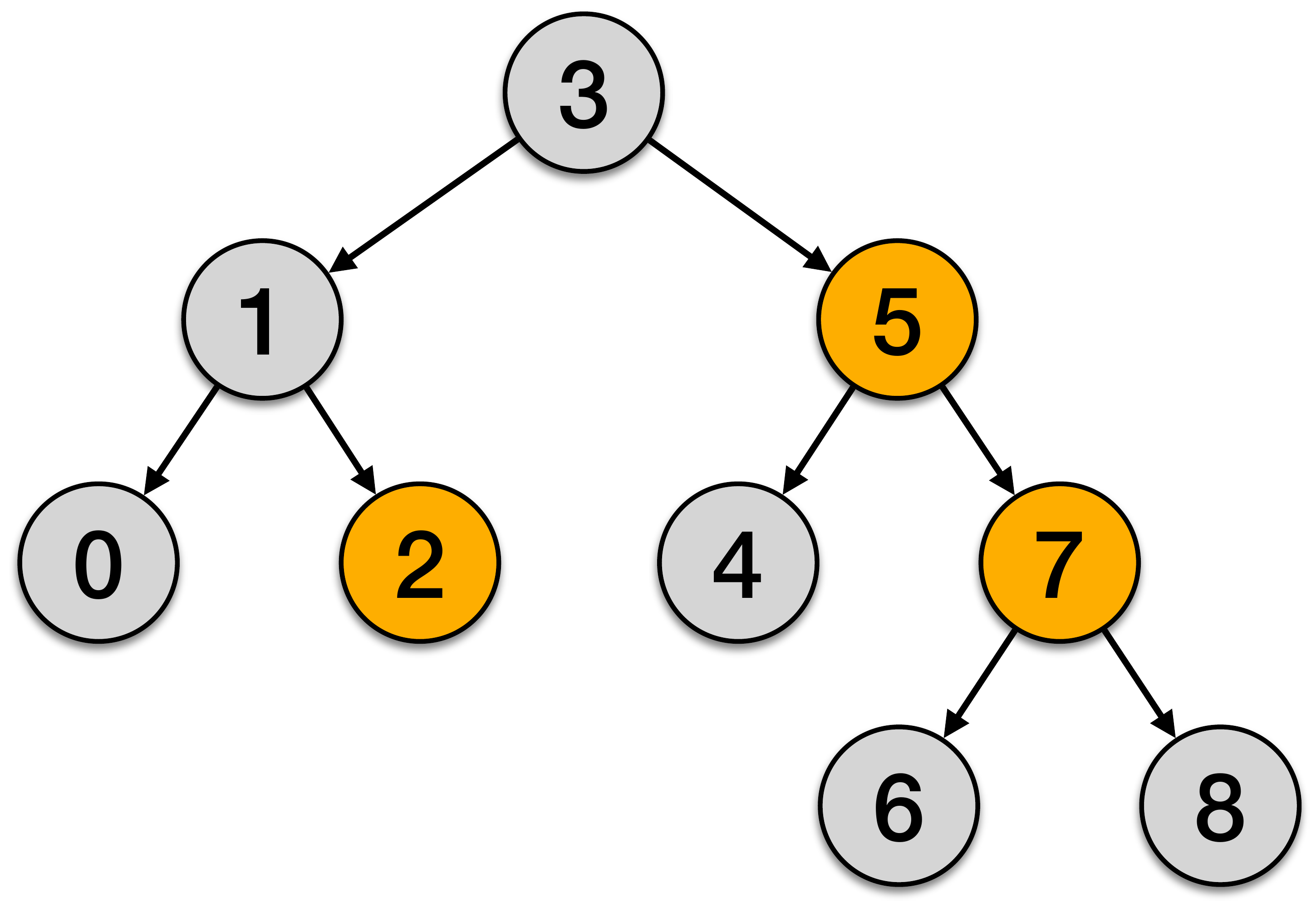}    
			\caption{$PF$-tree}
			\label{fig:pf-tree}
		\end{subfigure}
		\begin{subfigure}[t]{0.32 \columnwidth}
			\includegraphics[width=\textwidth]{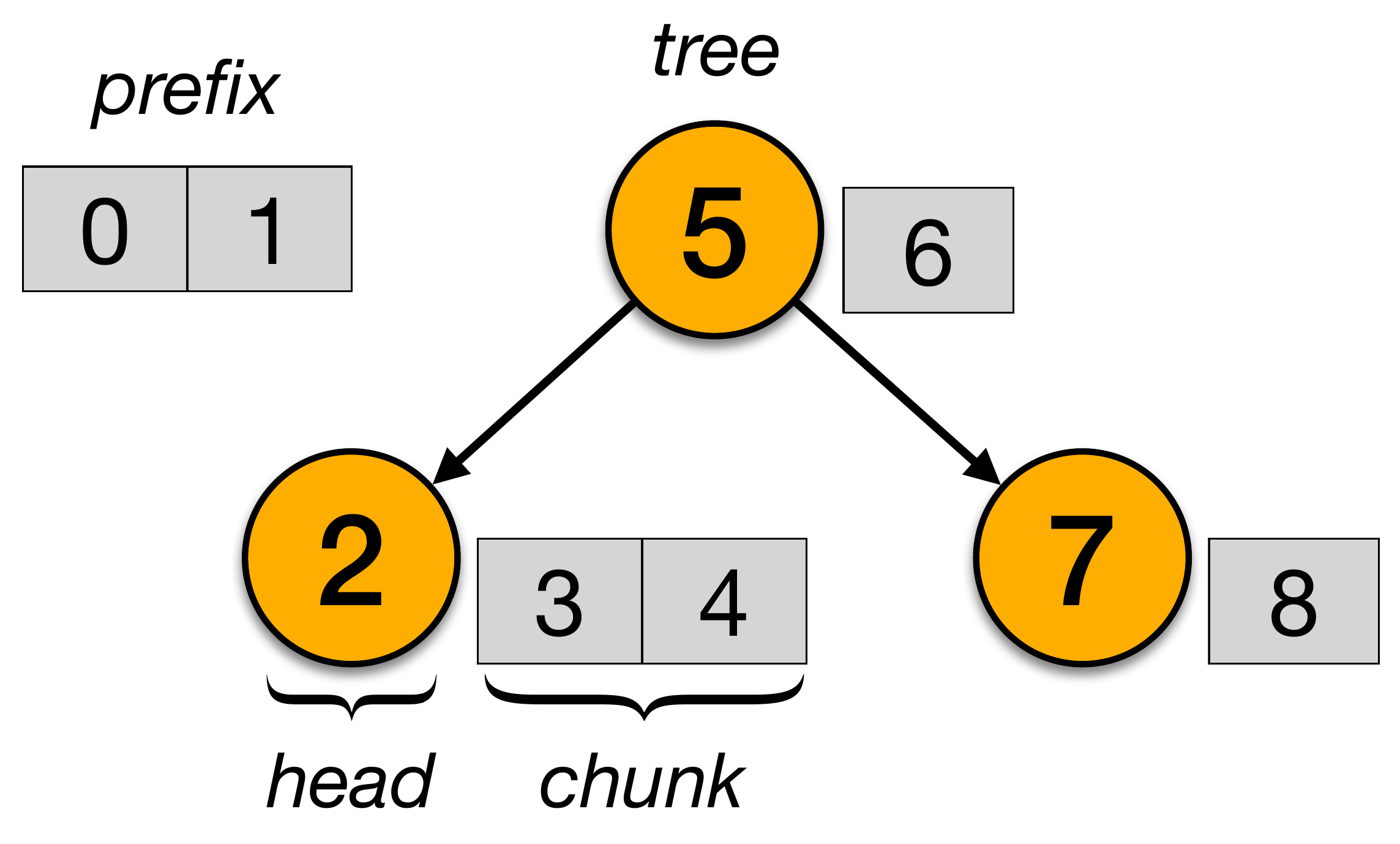}
			\caption{$C$-tree}
			\label{fig:c-tree}
		\end{subfigure}
	}
	\caption{{\bf (a)}~Example graph.
		{\bf (b)}~The corresponding purely-functional binary search tree: Each vertex id is stored in a separate tree node; Orange vertices are heads.
		{\bf (c)}~The corresponding $C$-tree.}
	\label{fig:c(pf)-tree}
\end{figure}

\vspace{0.1cm}
\noindent\textbf{Running example.}
We use an e-commerce graph that contains users and items extracted from the Taobao e-commerce platform~\cite{vldb19-aligraph} as our running example. The vertices correspond to items that can be purchased through the platform, and the edges denote the item-item relationships, i.e.,~items that users purchase together. Figure~\ref{fig:ecommerce-graph}
illustrates an excerpt of the graph.\footnote{Without loss of generality, in this running example we assume a homogeneous graph, yet our techniques apply to heterogeneous graphs as well.} We assign integer identifiers to the vertices for simplicity.

\vspace{0.1cm}
\noindent\textbf{Purely-Functional Trees ($PF$-trees).}
A $PF$-tree is a mutation-free tree structure that preserves 
its former versions when altered and yields a new tree version reflecting the update~\cite{osakaki99-purely}. 
Each element of a $PF$-tree serves as \textit{key}, and  
is 
kept in a separate tree node. 
Figures~\ref{fig:pf-tree}
illustrates the $PF$-tree for our example graph.

\vspace{0.1cm}
\noindent\textbf{Compressed Purely-Functional Trees (\texorpdfstring{$C$}{C}-trees).}
A $C$-tree~\cite{pldi19-aspen} is a binary PF-tree~\cite{osakaki99-purely}, which is additionally compressed and stores multiple elements in each vertex.
A chunking scheme takes the ordered set of elements to be represented and promotes some of them to \textit{heads}, which are stored in a purely-functional tree.
In more detail, given a set $E$ of
elements, one first computes the set of heads $\mathcal{H}(E) = \{e \in E | h(e) \mod b = 0\}$, 
where $b$ is the chunking parameter 
indicating the number of elements each chunk roughly retains, $h : K \rightarrow {1, \dots, N}$ is a hash function drawn from a uniformly random family of hash functions ($N$ is some sufficiently large range).
For each $e \in \mathcal{H}(E)$ let its tail be $t(e) = \{x \in E | \textrm{ } e < x < next(\mathcal{H}(E), e)\}$, where $next(\mathcal{H}(e), e)$ returns the next element in $\mathcal{H}(E)$ greater than $e$. 
Thus, the rest of the elements are stored in tails that are associated with each head vertex of the tree. 
It can exist a headless tail containing the smallest elements to be represented, i.e.,~the \textit{prefix}. 
The prefix, as well as the tails, are both called \textit{chunks}. 
Figure~\ref{fig:c-tree} illustrates the compressed $C$-tree variant of the $PF$-tree in Figure~\ref{fig:pf-tree} for our graph example in Figure~\ref{fig:ecommerce-graph}.
$C$-trees maintain similar asymptotic cost bounds as the uncompressed trees while improving space consumption and cache performance. 
The expected size of chunks in a $C$-tree is $b$,
while the maximum size is w.h.p.~\footnote{The term w.h.p. is an abbreviation of ``with high probability'', i.e.,~with probability $1 - 1/n^c$ for some constant $c$.} $O(b\log n)$. 
The number of heads in a $C$-tree over a set of $n$ elements is $O(n/b)$ w.h.p., and the maximum size of a tail 
or prefix is w.h.p. $O(b\log n)$. 

\vspace{0.1cm}
\noindent\textbf{Pairings.} 
A \textit{pairing function} encodes a pair of natural numbers into a single natural number, uniquely and reversibly.
It is a computable bijection $\pi$ : $\mathbb{N} \times \mathbb{N} \rightarrow \mathbb{N}$. 
We adopt the convention $\langle x,y \rangle$ for a pairing between $x$ and $y$.
Pairing functions share a set of properties with the basic ones being: A pairing function: (i) is an \textit{injection}, (ii) does not contain zero in its range, and (iii) is onto the set $\mathbb{N}^{*}$\footnote{Equivalently, $\mathbb{N}^{*} = \mathbb{N} \setminus \{0\}$, i.e.,~the range of the function does not include zero.}. 
Furthermore, pairing functions possess the following \textit{ordering properties} that are crucial for enhancing the performance of our algorithms,
namely:






\begin{property}[Strict Weak Ordering]
\vspace{-0.2cm}
\label{prop:weak-ordering}
\begin{small}
\begin{align} 
    (\langle x, y \rangle < \langle x', y' \rangle \leftrightarrow x + y < x' + y') \textrm{ or } \nonumber 
    (x + y = x' + y' \textrm{ and } x < x') \nonumber
\end{align}
\end{small}
\end{property}


\begin{corollary}
\label{cor:weak-ordering}
From Property~\ref{prop:weak-ordering} it follows that:
\begin{small}
\begin{equation}
    x + y < x' + y' 
    \rightarrow 
    \langle x, y \rangle \leq \langle x', y' \rangle
\end{equation}
\vspace{-0.6cm}
\end{small}
\end{corollary}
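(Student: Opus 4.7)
The plan is a direct appeal to Property~\ref{prop:weak-ordering}, reading that statement as the biconditional
\[
\langle x, y \rangle < \langle x', y' \rangle \;\Longleftrightarrow\; \bigl[(x + y < x' + y') \,\text{ or }\, (x + y = x' + y' \text{ and } x < x')\bigr].
\]
Under this parse, the corollary is essentially the right-to-left direction of the biconditional restricted to the first disjunct, weakened from strict to non-strict inequality.

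The argument would proceed in three short steps. First, assume the hypothesis $x + y < x' + y'$. Second, observe that this makes the first disjunct of the right-hand side of Property~\ref{prop:weak-ordering} true, so the whole disjunction holds. Third, apply the $(\Leftarrow)$ direction of the biconditional to conclude $\langle x, y \rangle < \langle x', y' \rangle$, which in particular gives $\langle x, y \rangle \leq \langle x', y' \rangle$, as required.

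There is no real obstacle here; the only subtlety is confirming the intended grouping of the disjunction in Property~\ref{prop:weak-ordering}, since as typeset the biconditional arrow and the outer ``or'' could be misread. The grouping above is the only one consistent with the function being a well-defined pairing (the lexicographic order by sum-then-first-coordinate is a standard construction, e.g.\ Cantor's pairing), so I would briefly note that interpretation and then give the one-line derivation. The weakening from $<$ to $\leq$ in the conclusion is of course trivial but worth stating explicitly, since downstream uses of the corollary may apply it in contexts where equality of the pairs is vacuously impossible (because $x+y \neq x'+y'$ already implies $(x,y) \neq (x',y')$), so the $\leq$ form is simply the convenient shape for later lemmas.
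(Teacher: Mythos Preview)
Your proposal is correct and matches the paper's treatment: the paper gives no explicit proof of the corollary at all, simply asserting that it follows from Property~\ref{prop:weak-ordering}, and your one-line derivation via the right-to-left direction of the biconditional (restricted to the first disjunct, then weakened to $\leq$) is exactly the intended immediate inference.
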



The most well-known pairing functions are Cantor \cite{wiki:pairings} and Szudzik \cite{pdf:szudzik}. 
We adopt the latter one because it ensures that, if both operands are up to $N$-bits, the range of encoded output stays within the limits of a $2N$-bit integer. 
Below, we provide the formulas of $Szudzik(x,y)$ for pairing, and of $Szudzik^{-1}(z)$ for unpairing:


\begin{small}
\begin{flalign} 
    Szudzik(x,y)=
        \begin{cases*}
            y^2 + x & if $x < y$\\
            x^2 + x + y & if $x \geq y$ \nonumber
                 \end{cases*} &&
\end{flalign}
\vspace{-0.2cm}
\end{small}

\begin{small}
\begin{flalign} 
    Szudzik^{-1}(z)=
        \begin{cases*}
            \{z - \lfloor \sqrt{z} \rfloor^2, \lfloor \sqrt{z} \rfloor\} & if $z - \lfloor \sqrt{z} \rfloor^2 < \lfloor \sqrt{z} \rfloor$ \\
            \{\lfloor \sqrt{z} \rfloor, z - \lfloor \sqrt{z} \rfloor^2 - \lfloor \sqrt{z} \rfloor\} & if $z - \lfloor \sqrt{z} \rfloor^2 \geq \lfloor \sqrt{z} \rfloor$ \nonumber
        \end{cases*}&&
\end{flalign}
\end{small}

\section{Problem statement}
\label{sec:problem-statement}
We now formally define the problem we address in this paper.
To do so, we first formalize the streaming graph foundations (Section~\ref{subsec:streaming-graphs}), the notion of random walks and walk corpus (Section~\ref{subsec:random-walks-corpuses}), where we also define statistical indistinguishable random walks. Finally we state the problem of \textit{streaming random walks} (Section~\ref{subsec:stateful-streaming-random-walks}). 


\subsection{Streaming Graphs}
\label{subsec:streaming-graphs}

We assume the popular \textit{edge stream} model, where a graph stream is regarded as a sequence of incoming edges.
We consider \textit{unbounded} graph streams, i.e.,~there is no limit on the number of graph updates that arrive.
A graph update is a set of edge {\em insertions} and {\em deletions}.
In the edge stream model, vertex insertions and deletions happen implicitly via edge updates:
A vertex is added when an edge with a vertex not present in the current graph is inserted, while a vertex is deleted only when its degree becomes zero after an edge deletion.


\label{subsec:graph-state-update}

Following this model, a \textit{streaming graph} is a graph that is subject to edge updates
at a very high rate. 
Specifically, a graph update, $\delta\mathcal{G}$, is a set containing both insertions and deletions of edges. 
Thus, a streaming graph is a long sequence of discrete graph snapshots containing graph updates that should be applied as they arrive.
We formally define a streaming graph as follows:

\begin{definition}[Streaming Graph]
\label{def:dynamic-graph}
A streaming graph is a sequence of discrete graph \textit{snapshots}, $\mathcal{G}^t = \{\mathcal{V}^t, \mathcal{E}^t\}$, 
where $\mathcal{V}^t = \{v_1^t,\dots,v^t_{n}\}$ are the vertices, $\mathcal{E}^t = \{e^t_1,\dots,e^t_{m}\}$ are the edges, and $t \in \mathbb{N}$ is a timestamp.
\end{definition}

\noindent Note that after applying a graph update $\delta\mathcal{G}^t$ 
to a graph snapshot $\mathcal{G}^t$ we end up with the new graph snapshot at timestamp $t+1$, i.e.,~$\mathcal{G}^{t+1}=\mathcal{G}^t+\delta\mathcal{G}^t$.

\subsection{Random Walks}
\label{subsec:random-walks-corpuses}
A random walk on a graph serves as a sample of the graph and is utilized by many applications, such as PageRank~\cite{www99-pagerank, intmath05-perpagerank} and online influence maximization~\cite{ijcai15-influence, kdd15-influence, kdd10-influence, sigmod18-xiaokui}.



\begin{definition}[Random Walk]
\label{def:random-walk}
A 
random walk 
is a $k$-th order Markov chain, where the state space of which is the set of graph vertices $\mathcal{V}$ and the future state depends on the last $k$ steps. 
A random walk $w$ of length $l$ comprises a sequence of vertices, $v_{ 1},v_{2},\dots,v_{j},\dots,v_{l}$, where $v_{j}$ is the $j$-th vertex in $w$ and $j \in \{1,\dots,l\}$, and every two consecutive vertices are connected with an edge.
\end{definition}

In the general case, a random walk $w$ is generated by sampling a vertex $v_j$ given the $k$ previous vertices $v_{j - k}$, $\dots$, $v_{j - 1}$ in $w$ from the following transition probability distribution: $prob(v_j|v_{j-1},\dots,v_{j-k})$.
Note that this probability is non-zero only if an edge between vertices $v_{i-1}$ and $v_i$ exists.
We compute the transition probability following a random walk model, e.g.,~DeepWalk~\cite{kdd14-deepwalk} 
which is a first-order random walk.
In DeepWalk, a walker, which is currently residing at a vertex, consults solely its neighbours to derive the transition probability to select the next vertex for a random walk to visit.
For instance, the walker producing walk $w_0$ (see Figure~\ref{fig:walk-corpus}) moves from vertex $v_1$ to $v_3$ with probability $\frac{1}{3}$, as $v_1$ has three neighbours, namely $v_0$, $v_2$, and $v_3$ (Figure~\ref{fig:ecommerce-graph}). 

\begin{figure}[t]
\includegraphics[width=0.7\linewidth]{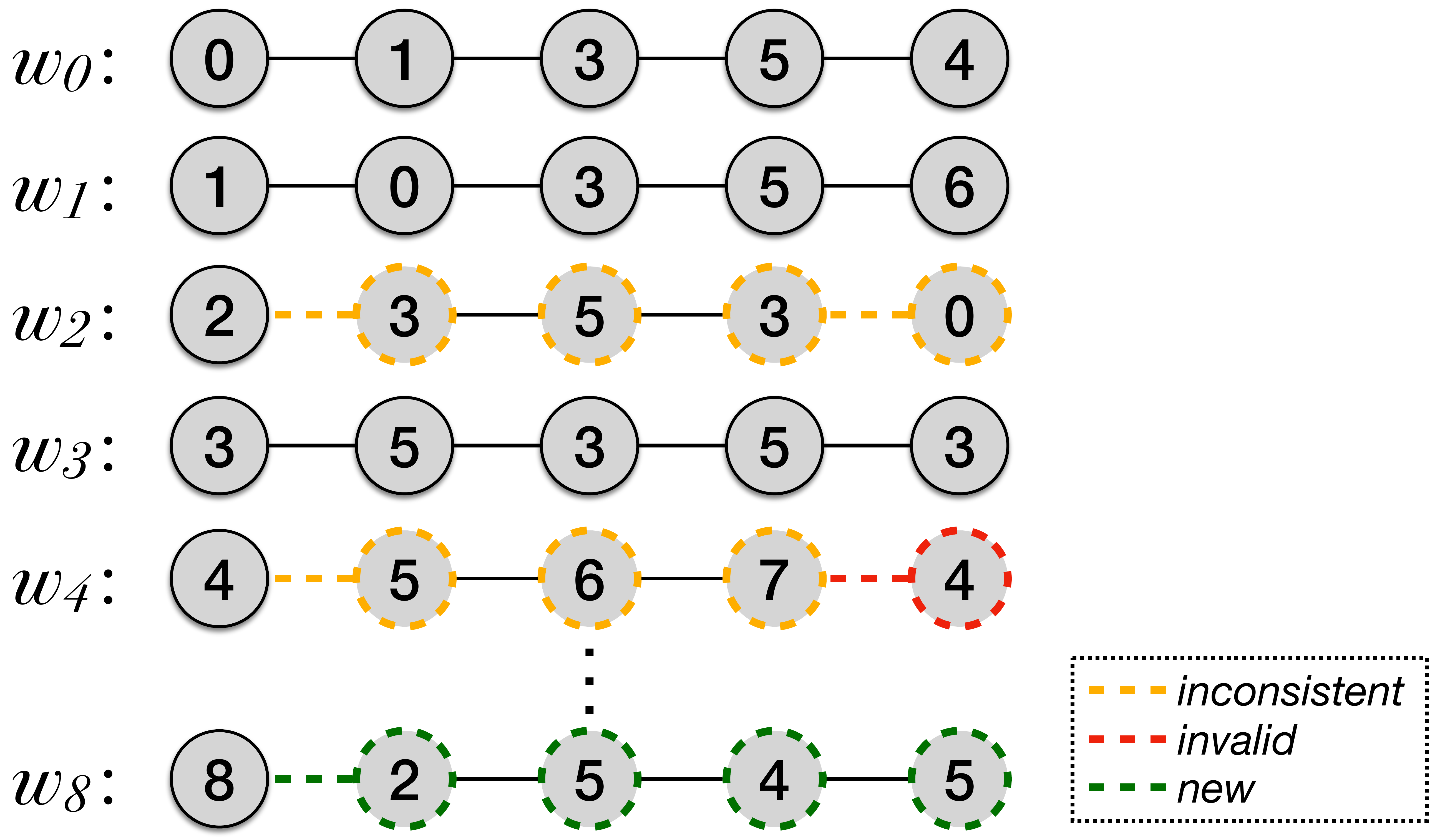}
\vspace{-0.2cm}
\caption{Excerpt of a walk corpus of our graph example.}
\label{fig:walk-corpus}


\vspace{-0.4cm}
\end{figure}

The set of random walks extracted from a graph is referred to as \textit{walk corpus}.
Figure~\ref{fig:walk-corpus} shows an excerpt of a walk corpus that contains a set of random walks for $n_w=1$ and $l=5$, where $n_w$ is the number of walks that initiate from each vertex and $l$ is the length of each walk in $\mathcal{W}$.
As the graph evolves fast, 
walks can become \textit{inconsistent}, and even in certain cases \textit{invalid}:
A random walk is inconsistent when it
does not reflect the graph transition probabilities correctly;
An invalid random walk, in contrast, is an inconsistent walk that has been disrupted by an edge deletion and it cannot be recreated in the updated graph.
Note that in the sequel, whenever it is not necessary to differentiate between inconsistent and invalid walks we refer to them simply as \textit{affected}.

We illustrate both inconsistent and invalid random walks in the walk corpus of Figure~\ref{fig:walk-corpus}. 
Assume that the edge deletion of $\{4, 7\}$ happens before the edge addition of $\{2, 8\}$ in the example graph of Figure~\ref{fig:ecommerce-graph}. In this case, $w_2$ becomes inconsistent as the transition probabilities for a walker residing on graph vertex $v_2$ change and thus we need to refine all subsequent walk vertices of $w_2$. 
Also, $w_4$ becomes invalid due to the edge deletion, and at the same time inconsistent as the transition probabilities of vertex $v_4$ change. 
In our work, we choose to refine both invalid and inconsistent walks, to ensure \textit{statistical indistinguishability} as we describe next. 

Ideally, we want to update only those random walks that become inconsistent or invalid after a graph update.
Updating one of these walks
means updating all its {\em affected vertices}, i.e.,~those vertices that make the random walk inconsistent or invalid.
This is because they might not match the transition probabilities of the updated graph $\mathcal{G}^{t+1}$.
For instance, $w_2$ becomes inconsistent and we need to refine all subsequent walk vertices of walk $w_2$, e.g.,~producing an updated walk $v_2, v_3, v_5, v_4, v_2$.
Note that random walks that do not match the transition probabilities of the graph can lead to low accuracy of downstream tasks.
We say that a walk corpus is up-to-date when it is {\em statistically indistinguishable}~\cite{arxiv19-incremental-node2vec}, which we 
define as follows:

\begin{property}[Statistical Indistinguishability]
\label{def:statistical-indistinguishability}
A walk corpus $\mathcal{W'}$, resulting from updating a walk corpus $\mathcal{W}$ after a graph update $\mathcal{\delta\mathcal{G}}$, is statistically indistinguishable if 
it is equi-probable with a new walk corpus 
that is generated from scratch on graph $\mathcal{G}' = \mathcal{G} + \delta\mathcal{G}$.
\end{property}

\subsection{Streaming Random Walks}
\label{subsec:stateful-streaming-random-walks}

We now formally define
the problem of computing \textit{streaming random walks}. 
Specifically, we aim at representing walk corpuses space-efficiently, enabling incremental updates, and allowing for fast walks access. 
In the sequel, we refer to a specific random walk algorithm as {\em random walk model}.
Formally:





\begin{problem}
Given a random walk model $\mathcal{M}$, we define the problem of streaming random walks as (i)~maintaining and storing in main memory a walk corpus $\mathcal{W}$ that is generated based on $\mathcal{M}$, (ii)~ensuring that $\mathcal{W}$ is always statistically indistinguishable, and (iii)~updating $\mathcal{W}$ incrementally without recomputing it from scratch.
\end{problem}

Tackling the above problem 
is challenging for three main reasons:
(1)~We should avoid external structures to index the walks to avoid expensive index updates
(Section~\ref{sec:graph-walk-structure});
(2)~We should efficiently navigate through random walks (Section~\ref{sec:optimized-search}); 
(3)~We should efficiently update random walks while they are in their compressed form to avoid time-consuming (de)compression processes 
(Section~\ref{sec:update-random-walks}).

\section{Graph-Walk Structure} 
\label{sec:graph-walk-structure}
 
Our goal is to come up with a data structure that stores the random walk corpus together with the streaming graph. In this way, we can both speed up the update process and reduce the required storage space. In addition, we aim at indexing the random walks so that we can quickly locate the (parts of) random walks that require updating and keep them up-to-date with the graph updates. One may think that maintaining a simple inverted index for the walks suffices for fast access, yet it is not efficient for updating walks in a streaming scenario as we will show in the experimental evaluation.

In a nutshell, we propose a \textit{hybrid-tree} data structure  (Section~\ref{subsec:hybrid-tree-wharf}) that aims at tackling both the challenge of 
efficiency and space.
The hybrid-tree enables not only efficient graph updates, but also efficient access to random walks, and consequently, efficient random walk updates by avoiding calculating all walks from scratch.
We store random walks as a set of triplets within the hybrid-tree (Section~\ref{subsec:walk-triplets}).
Such a triplet representation serves at the same time as an index on the random walks.
We use pairing functions with ordering properties to encode the derived triplets into integers and reduce space.
(Section~\ref{subsec:pairing-triplets}).
This allows us to further compress the random walks using difference encoding 
(Section~\ref{subsec:difference-encoding}).

\subsection{Hybrid Tree}
\label{subsec:hybrid-tree-wharf}

We illustrate the hybrid-tree data structure in Figure~\ref{fig:wharf-representation}. 
The hybrid tree is a tree-of-trees where each node of the outer tree consists of an id and two trees.
All the vertices of the graph are stored in the outer tree, which we call \textbf{\textit{vertex-tree}} (outer gray
nodes in Figure~\ref{fig:wharf-representation}).
Each vertex in the vertex-tree (outer tree) stores the identifiers of its adjacent neighbours in a \textit{C}-tree, which we call \textbf{\textit{edge-tree}} (inner left blue tree).
In addition, each outer vertex also stores the parts (vertices) of the random walks in which it participates in a second $C$-tree, which we call \textbf{\textit{walk-tree}} (inner right yellow tree).
This design 
allows us to efficiently access specific random walks in a walk corpus
and update them together with a graph update.
Specifically, a hybrid-tree enables both fast access to a specific entry of a vertex in a random walk and storing large walk corpuses.

\begin{figure}[t]
\centering
\includegraphics[width=0.45 \textwidth]{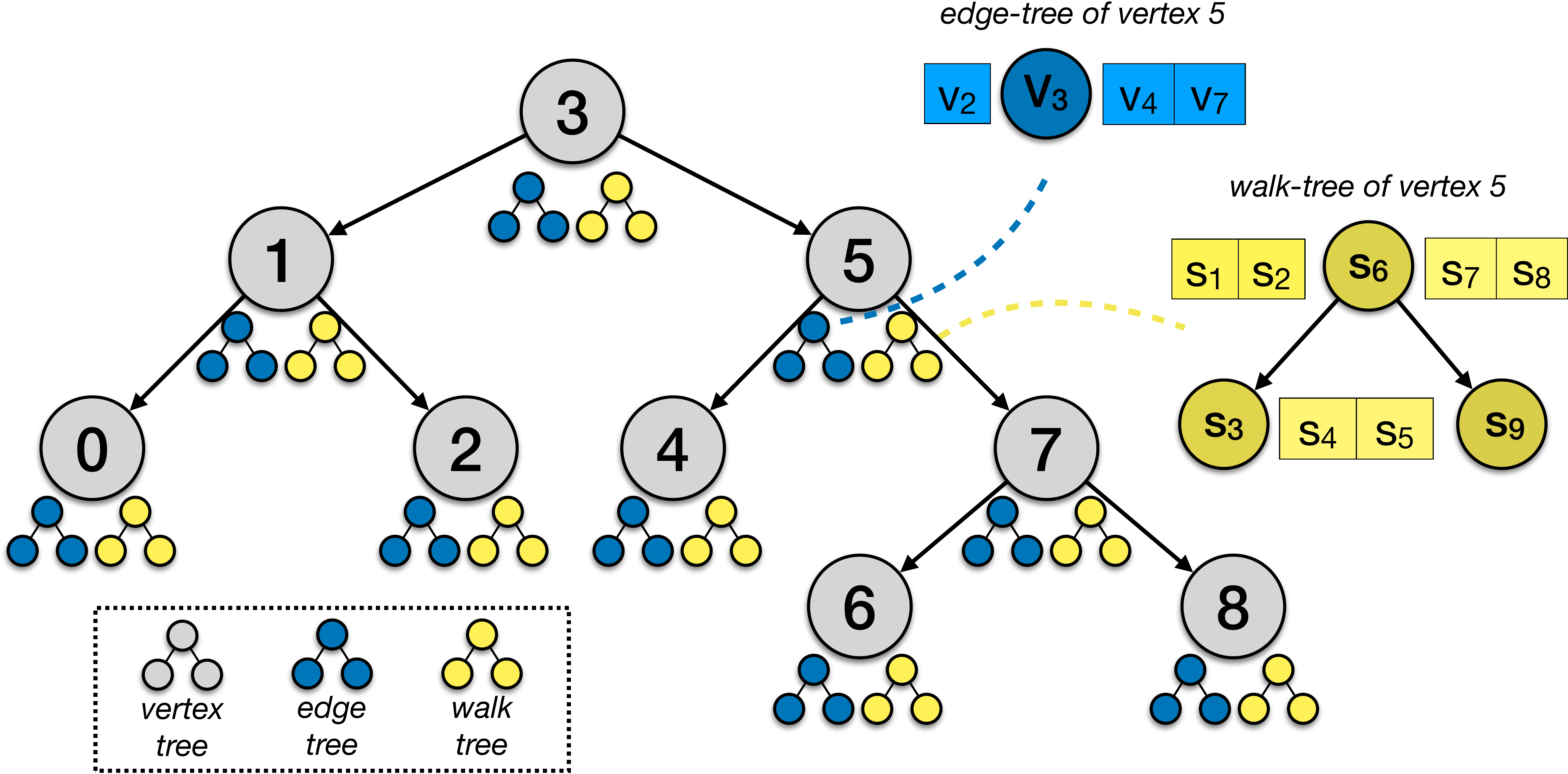}
\vspace{-0.2cm}
\caption{\wharf's \textbf{\textit{hybrid-tree}}. The edge-tree of vertex $5$ contains its neighbors ($v_2$, $v_3$, $v_4$, $v_7$) as illustrated in Figure~\ref{fig:ecommerce-graph} and the walk-tree of vertex $5$ contains the encoded walk triplets ($s_1,\dots,s_9$) that correspond to $v_5$'s entries in the walk corpus of Figure~\ref{fig:walk-corpus}. 
} 
\label{fig:wharf-representation}
\vspace{-0.2cm}
\end{figure}

The hybrid-tree is a two-level tree structure that has $O(\log n)$ overall depth using any balanced binary tree implementation. 
For the vertex-tree, as well as for storing the heads of the edge-trees and walk-trees, we use a {\em parallel augmented map} (PAM)~\cite{ppopp18-pam}, as they enable safe parallelism and lightweight snapshots. 
In our design, the edge- and walk-trees are subcomponents 
to ensure that once we acquire a snapshot of the purely-functional vertex-tree, we directly gain access to the state of both the graph and the walk corpus. 
Note that we represent the edge- and walk-trees with the $C$-tree structure as it allows for safe parallelism, lightweight snapshots, strict query serializability, efficient space usage, and cache locality.
\subsection{Walk Triplet Representation}
\label{subsec:walk-triplets}

We utilize a triplet-based representation to store a random walk within the walk-tree: 
For each vertex of the graph, we keep the walk id that the vertex participates in, the position of the vertex in the walk, and the next vertex of the random walk. 

In detail, each vertex $v$
in a corpus $\mathcal{W}$ can be uniquely described by the pair $(w_{i}, p_{j})$, where $w_{i}$ is the walk it participates in, with $i \in \{1, ..., |\mathcal{W}|\}$, 
and $p_{j}$ is the position of $v$ 
in $w_{i}$ where $j \in \{1,...,l\}$. 
In other words, a pair $(w_{i}, p_{j})$ serves as the coordinates of the vertex in the corpus $\mathcal{W}$ and thus, when seeking for $v$, this pair behaves as its search key. 
We, thus, denote with $v_{w_i,p_j}$ the identifier of a vertex $v \in \mathcal{W}$.
Note that a vertex may appear multiple times in a walk corpus as well as in the same walk. 
Instead of storing raw walk sequences, we group the walk triplets by vertex identifier and store them in their corresponding walk-trees of the hybrid-tree structure 
associated with 
the appropriate vertex of the vertex-tree.
\emph{This enables fast access
to specific vertices of walks in the corpus and space efficiency.}
However, instantly accessing an affected vertex in the corpus is not enough.
We should also be able to traverse a walk so that we can update it efficiently.
To achieve this, we maintain the vertex identifier of the next node, $v_{w_{i}, p_{j+1}}$ at position $p_{j+1}$ in a walk $w$ as the third element of a walk triplet.
We thus represent each vertex $v_{w_{i}, p_{j}}$ with a \textit{walk triplet} of the form $(w_{i}, p_{j}, v_{w_{i}, p_{j+1}})$. 
Ultimately, \emph{these walk triplets serve for both storing and traversing any random walk sequence efficiently}. 
For example, in the corpus of Figure~\ref{fig:walk-corpus}, walk $w_0$ can be represented as a sequence of the following triplets: $(w_0, p_0, v_1), (w_0, p_1, v_3), (w_0, p_2, v_5), (w_0, p_3, v_4), (w_0, p_4, v_4)$. 
Note that the position numbering starts from $0$ and that for the last vertex of a walk, the next vertex of its walk triplet has the same vertex id with the vertex itself denoting the end of the walk\footnote{Note that we can use any other termination identifier such as the integer $-1$.}. 


\subsection{Walk Triplet Pairing}
\label{subsec:pairing-triplets}
Storing integer values instead of entire triplets objects achieves great space savings. This is not only because of the object footprint but also because it allows for difference encoding.
We, thus, convert the walk triplets into integers to store them in the $C$-trees. 
Our main idea is to encode each walk triplet into a unique integer
via a pairing function. 
We could easily achieve this encoding
by two invocations of a pairing function: encoding the first two elements of the triplet into a paired value and then encode again the paired value with the third element.
Yet, pairing comes at a cost. 
Recall from Section~\ref{sec:running-example-and-background} that for two operands that are up to $N$ bits, Szudzik pairing function returns a $2N$-bit number.
Thus, the fewer pairing function invocations, the smaller the encoded walk triplet values.

We, thus, reduce the number of pairing function invocations by first encoding the walk identifier of a vertex along with its position in the corresponding walk together into a single number.
Specifically, for a walk triplet $(w_{i}, p_{j}, v_{w_{i}, p_{j+1}})$ and given that the length of $w_{i}$ is $l$, we devise the following function to encode $w_{i}$ and $p_{j}$ into a single integer: $f(w_i, p_j)=w_{i}\times l + p_{j}$.
We, then, invoke Szudzik once to pair the output of this function, $f(w_i, p_j)$,  with the identifier of the next node in the walk, $v_{w_{i}p_{j+1}}$: $\langle f(w_i, p_j), v_{w_{i}, p_{j+1}}\rangle$.
When we unpair an encoded walk triplet, we retrieve the walk id and position from $f$ as follows: $w_{i} = \left \lfloor{\frac{f}{l}}\right \rfloor \text{~and~}  p_{j} = f \;\mathrm{mod}\; l$.



\noindent Note that $p$ is upper bounded by $l$, and thus, we can utilize the simple function $f$ to encode $w$ and $p$ as well as revert to the original values with the above-mentioned equations. However, 
in the streaming setting there is no upper bound for $v_{w_{i}, p_{j+1}}$, so we rely on a pairing function for the final encoding.  
Specifically, we used the Szudzik function because it
ensures that for two $N$-bit integer arguments, its value is at most a $2N$-bit integer, and thus, guarantees that there will be no integer overflows.
For instance, in our running example for triplet $(w_0, p_0, v_1)$, we invoke $Szudzik\langle w_0 \times l + p_0, v_1 \rangle$
to get the integer value that we will insert in the $C$-tree.
Thus, function $f$, as well as $v_{w, p_{j+1}}$, 
must be at most $N$-bit numbers.
Formally: 
%

\vspace{-0.2cm}
\begin{small}
\begin{equation}
    f(w, p) = w\times l + p \leq 2^N-1 \wedge \nonumber 
    v_{w, p_{j+1}} \leq 2^N-1, \quad where\;p_{j+1} \leq l
    \label{eq:pairings-range-limitation}
\end{equation}
\end{small}
\vspace{-0.4cm}

\noindent which dictates the cap of maximum values for $w$, $l$, and $v_{w, p_{j+1}}$. 
Encoded 
triplets 
go in the walk-tree of the 
vertex they correspond. 
Let us now illustrate how the walk- and edge-trees in our running example are populated.
Focusing on vertex $v_5$, assume that it only appears in the walks that are shown in Figure~\ref{fig:walk-corpus} as well as in the first position of $w_5$ with $v_7$ as its next vertex (not shown). 
Figure~\ref{fig:wharf-representation} shows the contents of $v_5$'s walk-tree. 
The walk triplets of vertex $v_5$ are: $(w_0, p_3, v_4)$, $(w_1, p_3, v_6)$, $(w_2, p_2, v_3)$, $(w_3, p_1, v_3)$, $(w_3, p_3, v_3)$, $(w_4, p_1, v_6)$, $(w_5, p_0, v_7)$, $(w_8, p_2, v_4)$, $(w_8, p_4, v_5)$. After the encoding we get the integer values $s_1, s_2, \dots, s_9$, respectively, where $s_1 < \dots < s_9$ holds without loss of generality (w.l.g.)
As we see in Figure~\ref{fig:wharf-representation}, \wharf stores the encoded triplets 
monotonically inside the walk-tree and $s_3, s_6, s_9$ are selected as head vertices. 
In addition, Figure~\ref{fig:wharf-representation} shows the edge-tree of $v_5$ that contains its neighbours in our running example graph (Figure~\ref{fig:ecommerce-graph}), which are $v_2, v_3, v_4, v_7$. 
Assuming that $v_2 < v_3 < v_4 < v_7$ (w.l.g.), they are monotonically stored inside $v_5$'s edge-tree ($v_3$ acts as head). 

\subsection{Walk Triplet Compression}
\label{subsec:difference-encoding}

It is worth noting that pairing invocations produce 
numbers that are much larger than their arguments, 
which incurs a large storage overhead.
Difference encoding (DE) alleviates this problem, as we store only the differences of the integers that correspond to encoded walk triplets in each chunk. 
More specifically, we exploit the fact that trees store elements (integer values) in sorted order in chunks to further compress the data structure.
Given a chunk containing $d$ integers, $\{I_1, I_2,\dots, I_d \}$, we compute the differences $\{I_1, I_2 - I_1,\dots, I_d - I_{d-1}\}$ and encode them 
using a variable byte-code \cite{dcc15-ligra+}. 
Note that after encoding the walk triplets with the Szudzik, we store them monotonically in increasing order in the $C$-trees, and thus, the differences produced by the 
DE are always non-negative.
Clearly, the difference encoding scheme applied to the chunks counterbalances the fact that we store ``big'' numbers produced by the pairings. 
Note that each chunk must be processed sequentially, namely decompressed and then re-compressed as a whole. 
The cost of the sequential decoding does not affect the overall work or depth of parallel tree methods, as the size of each chunk is small ($O(\log n)$ w.h.p.) for a constant chunking parameter $b$. 
Similarly, chunks must be re-compressed when receiving updates, which has cost on par with the cost of decompressing the chunks.   
We consider more sophisticated encoding schemes that avoid decompressing and re-compressing as future work.

\subsection{Space Complexity}
\label{subsec:space-complexity-comparison}

Let us now elaborate on the memory footprint that \wharf needs to store the walks. 
Assume we use $B$-bit integers, \wharf needs $B$ bits for each encoded walk triplet. 
The total number of walk triplets in a walk corpus is $|W| = n * n_w * l$, where $n$ is the number of vertices in the graph, $n_w$ is the walks per vertex, and $l$ is the length of each walk.  
Therefore, \wharf needs $\Theta(|W| \times B)$ space to store the walks. 
This is because we have one encoded walk triplet for each vertex in the walk corpus. 
On the other hand, a simplistic inverted index-based solution similar to~\cite{cnta19-evonrl} that stores the whole walk corpus sequentially needs $\Theta(|W| \times B)$ space for the walk sequences. 
Additionally, for the inverted index that relates each vertex id with the set of walk ids it participates, it needs $O(2 \times |W| \times B)$ space. 
Thus, total space complexity ends up being $O(3 \times |W| \times B)$. 

\section{Optimized Search}
\label{sec:optimized-search}

Before delving into the details of how we update random walks, we first discuss one of the 
factors that make \wharf highly performant in updating random walks: its capability to search in walk-trees so that walk traversal is possible.
Traversing walk-trees is challenging for two reasons. 
First, 
a random walk is 
represented as a set of triplets stored under different vertices of the vertex-tree.
Second, the only available information in a walk-tree is unique integer values, which are the encoded triplets.
In particular, given a vertex $v$ of a random walk $w$, the operation for finding the next vertex is essentially searching for a triplet $(w, p, *)$, but without knowing the actual value of its third element.
Thus, when seeking the next vertex in a walk of the corpus, the pair $\{w, p\}$ serves as a search key. 
The fact that walk-trees are filled with integer values representing encoded triplets, 
prevents us from directly using the search key to find the triplet we are looking for.
A trivial way of finding it would be to visit each vertex of the walk-tree, decode its encoded triplets to retrieve the original ones, and check if one of them corresponds to walk $w$ at position $p$.
In the worst case, we would decode all elements in the tree even if the triplet does not exist. 
The complexity of this process is $O(n)$ where $n$ is the number of elements in 
a walk-tree, which is prohibitive for large-scale streaming graphs.
Next, we describe an efficient 
search algorithm based on range queries. 

\small
\begin{algorithm} [t]
	\caption{\textsc{FindNext}}
	\label{alg:find-in-range}
	
	\begin{algorithmic}[1]
	    \State \textbf{Input:} walk-tree $WT$, walk id $w$, position $p$
	    
	    \State \textbf{Output:} next vertex $v_{w, p+1}$

		\State{$lb=\langle w \times l + p, WT.v^{min}_{w, p+1} \rangle$}
	    \Comment{lower bound search range}
	    
        \State{$ub = \langle w \times l + p, WT.v^{max}_{w, p+1} \rangle$}
	    \Comment{upper bound search range}
	    
        
			\If{$WT$ is $Empty$}

	            
	            \Return null
	        

			\ElsIf{$WT.prefix$ is $Empty$}

    	    \Return{\textsc{TraverseTree}($WT.tree.root, w, p, lb, ub$)}
    	    

			\Else
    	        
    	        \If {$ub \geq WT.prefix.first$ \textbf{or} $lb \geq WT.prefix.last$}
    	        
    	        \Return{\textsc{ExamineChunk}($WT.prefix$)}
    	         
                \Else
                
                \Return{\textsc{TraverseTree}($WT.tree.root, w, p, lb, ub$)}
    	          
    	        \EndIf
    	     \EndIf
    
	\end{algorithmic} 
\end{algorithm}
\normalsize

\subsection{Search Space Pruning}
We start by describing how we enable efficient searching over walk-trees without necessarily decoding all walk triplets in the worst case. 
The use of pairing functions is a calculated move, as they have properties that enforce ordering among triplets of a walk-tree.
We leverage this ordering property to reduce the search space in a walk-tree and hence achieve efficient search.

Based on Corollary~\ref{cor:weak-ordering}, we can construct a \textit{search range} $[lb, ub]$ for a vertex of walk $w$ at position $p$ where: 
 
\vspace{-0.1cm}
\begin{small} 
\begin{equation}
    lb = \langle w \times l + p, v^{min}_{w, p+1} \rangle \nonumber 
    ~\text{~and~}~
    ub =  \langle w \times l + p, v^{max}_{w, p+1} \rangle \nonumber
\end{equation}
\end{small}
\vspace{-0.4cm}

\noindent $v^{min}_{w, p+1}$ and the $v^{max}_{w, p+1}$ are the minimum and maximum next 
 vertex ids that appear in all the walk triplets of the walk-tree, respectively. 
We calculate the pair $\{v^{min}_{w, p+1}, v^{max}_{w, p+1}\}$ at the time we construct a tree and refine it when we update the random walks.
Conceptually, the search range is a subset of the range $[min, max]$ where:

\vspace{-0.1cm}
\begin{small}
\begin{align}
    min = \langle w_{min} \times l + p_{min}, v^{min}_{w, p+1} \rangle \nonumber 
        \text{~and~}
    max = \langle w_{max} \times l + p_{max}, v^{max}_{w, p+1} \rangle \nonumber
\end{align}
\end{small}
\vspace{-0.4cm}

\noindent The $min$ and the $max$ are the global minimum and global maximum encoded values that can be possibly found in a walk-tree.
Consequently, the range $[min, max]$ encloses all the walk triplets inside the tree.
$[lb, ub] \subseteq [min, max]$ holds for the two aforementioned ranges.
%
%
Therefore, if the walk triplet exists in the walk-tree, its encoded value must exist inside the range $[lb, ub]$.

\subsection{Next Vertex Search}
Algorithm \ref{alg:find-in-range} illustrates the \textsc{FindNext} operation which intuitively performs a range query in the reduced \textit{search range} as defined above. 
The algorithm receives as input a walk-tree $WT$,
a walk identifier $w$, and a position $p$, and returns the vertex $v_{w, p+1}$ at position $p+1$ of walk $w$. 
We initiate our search from the prefix part of the walk-tree (Lines~3-4). 
If the triplet is not found inside the prefix, then we continue the search in the tree part of the walk-tree. 
Note that Algorithm~\ref{alg:find-in-range} calls the \textsc{TraverseTree}($root, w, p, lb, ub$) procedure (Lines~6 and~9), 
which recursively traverses the tree part of a walk-tree while searching for matching walk triplets.
As pointed out in \cite{pldi19-aspen}, it is quite important to efficiently compute the first and last elements of a chunk $c$, i.e.,~the $c_{first}$ and $c_{last}$, respectively. 
Recall a chunk stores the encoded triplets.
Of course, $c_{first} < c_{last}$ holds. To avoid scanning whole chunks,
the first and last elements are stored at the head of each chunk for fetching $c_{first}$ and $c_{last}$ in $O(1)$ work and depth.
This modification is important to ensure that \textsc{FindNext} can be done in $O(b\log n + k)$ work and depth w.h.p. on a walk-tree, where $k$ is the number of encoded triplet values lying within this search range of $WT$. 
We can skip searching in a chunk $c$ (either in the prefix or in the tree) if $ub < c_{first}$ or $lb > c_{last}$, because all the encoded triplets inside $c$ are outside the search range (Line~9).
It is worth noting that as the tree part of $WT$ is actually a binary search tree and its encoded triplets are stored in increasing order, we search it by conducting an \textit{in-order} traversal. 

\subsection{Complexity}
We now discuss the complexity of our optimized search algorithm for finding a walk triplet at position $p$ of walk $w$ in a walk-tree $WT$, and in a search range $[lb,ub]$ of triplets encoded via a (constant work) pairing function. 
We conduct two root-to-leaf path searches 
based on the [$lb, ub$] 
range, which have complexity $O(b\log n)$.  
The 
range essentially dictates which ``internal'' walk-tree nodes that are enclosed in these two paths to search exhaustively. 
Then, assuming there are $k = |\{e | e \in $WT$ \textrm{ and } e \in [lb,ub]\}|$ leaves between the leaves of the two aforementioned search paths, we have to traverse them all, which has $O(k)$ complexity.  
Finally, the total complexity for the output-sensitive range search is $O(b\log n + k)$.

\section{Updating Random Walks}
\label{sec:update-random-walks}


\wharf applies walk updates in batches and in parallel.
It receives graph additions and deletions and buffers them to apply them in bulk. 
This allows \wharf to perform fast walk updates. 
It identifies the affected vertices, while updating the graph, and updates all those random walks that contain them.
This is important
as the number of affected vertices are several orders
of magnitude smaller than the total number of vertices.
Next, we describe how we compute and update the structure that keeps the affected vertices every time a batch of updates arrives. Then, we present our update algorithm, whose input includes the computed affected vertices structure. 

\subsection{Map of Affected Vertices} 
\label{subsec:man-computation}
We construct a {\em map of affected vertices} ($MAV$), while processing a graph update, to be able to update only the affected vertices.
In a nutshell, the $MAV$ is responsible for bookkeeping 
key affected vertices in each affected walk.
Note that in each affected walk the first encountered affected vertex is of special importance. 
This is because some of (if not all) the transition probabilities with which we sampled the subsequent vertices do not match the new probabilities in the updated graph.
Using walks that do not reflect the graph structure accurately can lead to low accuracy of downstream tasks that rely on them.
We thus formally define the $MAV$ as follows:

\begin{definition}[Map of Affected Vertices -- MAV] A MAV is a key-value map that contains affected vertices for each affected walk in a walk corpus $\mathcal{W}$:
	The key is the identifier of an affected walk $w$ and its value is the pair $\{v_{min}, p_{min}\}$, with $v_{min}$ being the first affected vertex in $w$ located at position $p_{min}$.
\end{definition}

We compute the $MAV$ as follows. 
Assume, without loss of generality, a batch of graph updates, $\delta\mathcal{G}$, containing undirected 
edges: 
with each edge $e = (s, d) \in \delta\mathcal{G}$ being treated as two directed edges, namely, one $e_1$ initiating from a source vertex $s$ to a destination vertex $d$ and another $e_2$ starting from $d$ to $s$.
Once an edge is incorporated into the appropriate edge-trees (one for each direction), it may render an existing walk in the maintained corpus \textit{inconsistent} or even worse \textit{invalid}. 
Specifically, based on the edge update (either insertion or deletion), 
we identify the affected walks and vertices from the walk-trees.
We distinguish the following two cases with respect to an updated edge
$e_1$
(similarly for the other direction $e_2$):

\noindent{\em (1) Edge Insertion}: After the insertion of an edge $e_1$, any walk $w\in\mathcal{W}$ containing vertex $s$ becomes inconsistent because its transition probability is not the same anymore;
In this case, we insert $(w, \{s, p_s\})$, where $p_s$ is the position of $s$ in $w$, into the $MAV$ if an entry for $w$ does not exist, otherwise, we update its entry with the pair $\{s, p_s\}$, if $p_s$ is smaller than the current $p_{min}$. 

\noindent{\em (2) Edge Deletion}: After the deletion of an edge $e_1$, any walk $w\in\mathcal{W}$ containing vertex $s$ becomes inconsistent, but it is invalid if it also contains a transition from $s$ to $d$; 
We update the $MAV$ exactly as in the case of edge insertion.
Our hybrid-tree allows us to efficiently update the $MAV$, as we only have to search the walk-tree of the source vertex that belongs to an edge addition/deletion.

\subsection{Batch Walk Update}
\label{subsec:batch-walk-updates}


The main idea is to translate a batch of graph updates into a batch of walk updates.
We do so by populating 
an \textit{insertion accumulator} 
which gathers the encoded triplets that correspond to the newly sampled vertices and then bulk-insert them in the corresponding walk trees. 
A \textit{merge} process then evicts the obsolete walk triplets.

\begin{figure}[t]
	\centering
        \includegraphics[width=0.35 \textwidth]{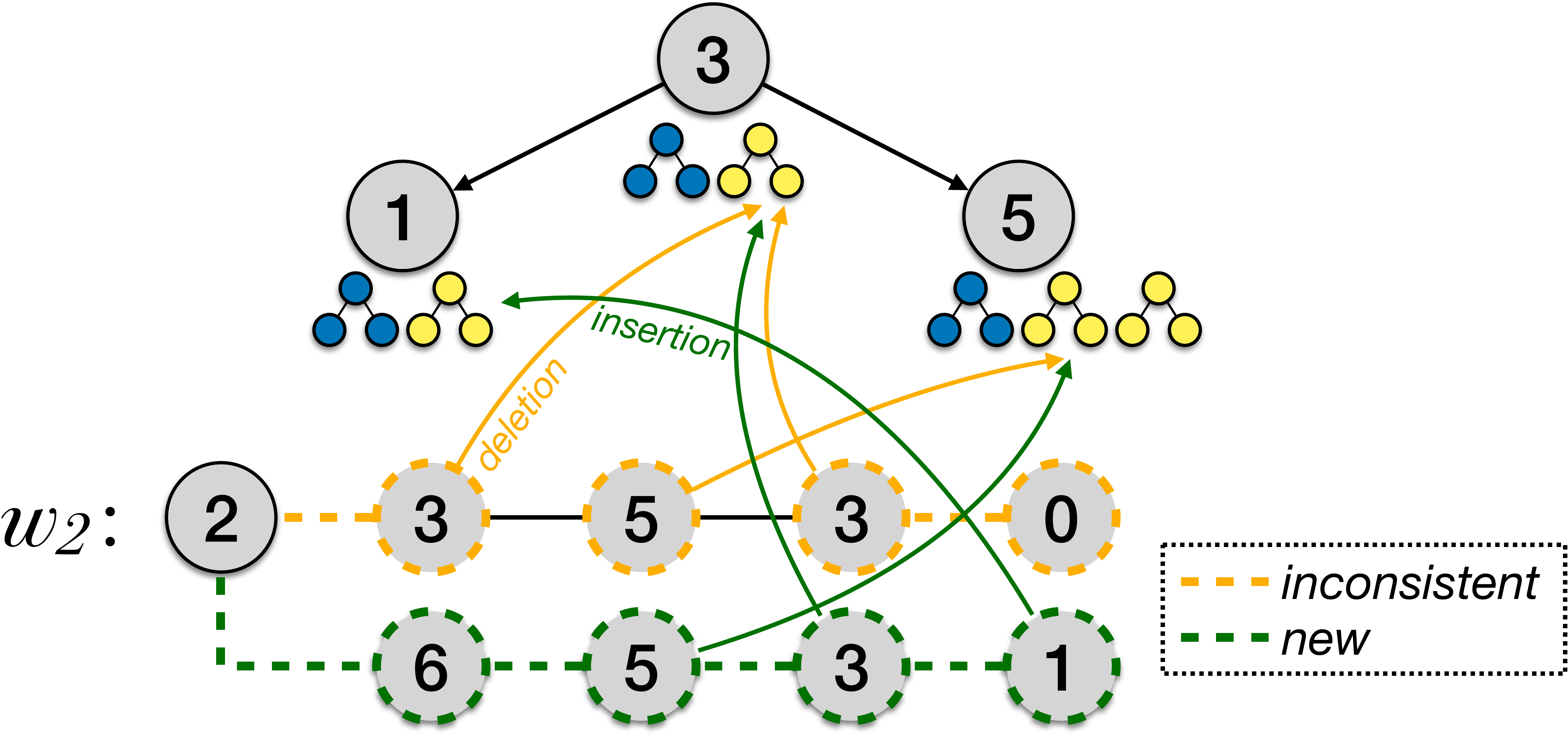}
	\caption{Running example of Batch Walk Update algorithm.} 
	\label{fig:running-example-bwu}
\end{figure}

As a walk corpus must remain statistically indistinguishable, 
we adopt the following update policy:\footnote{Yet, note that the walk update policy is orthogonal to our algorithm.} 
We update both inconsistent and invalid walks by deleting and re-sampling all affected vertices of an affected walk starting from the first affected vertex at position $p_{min}$ until the last vertex at position $p_{l}$.
For instance, Figure~\ref{fig:running-example-bwu} shows that walk $w_2$ of our running example is inconsistent, and we thus need to re-sample from its second vertex onward.

\wharf allows a vertex in the hybrid-tree to have more than one walk-tree version, each containing walk triplets corresponding to a 
distinct batch edge update. 
\wharf stores the walk-tree versions under a vertex in the order that it creates them. 
Furthermore, it utilizes a \textsc{Merge} operation that consolidates all the walk-tree versions of a vertex into a single one, after evicting all the obsolete walk-triplets. 
In specific, it scans the hybrid-tree in parallel, consults the MAV to check which triplets are still valid,
and removes the obsolete ones. 
\wharf uses an on-demand policy, which corresponds to merging walk-trees only when requested, e.g.,~when a downstream operation requests for the random walks.
Different policies with different throughput-memory trade-offs are also possible but we choose the on-demand one because it achieves the highest throughput. 
\small
\begin{algorithm} [t]
	\caption{\textsc{BatchWalkUpdate}}
	\label{alg:batch-walk-update}
	
	\begin{algorithmic}[1]
	    \State \textbf{Input:} hybrid-tree $\mathcal{H}$,  map $MAV$, walk model $\mathcal{M}$
	    \State \textbf{Output:} updated hybrid-tree $\mathcal{H}$

	\State {\textbf{do in parallel}}
	    \State{\hspace{\algorithmicindent}// Sampling of new walk parts}
        \State{\hspace{\algorithmicindent}\textbf{for}\textbf{ $w \in MAV$} \textbf{do in parallel}}
		\Comment { for each affected walk}
		\State\hspace{\algorithmicindent}\hspace{\algorithmicindent} $new\_ver = v_{w, p_{min}}$
	        \State{\hspace{\algorithmicindent}\hspace{\algorithmicindent}\textbf{for}\textbf{ $p=p_{min},\ldots,l-1$}\textbf{ do}}
	        \Comment {rewalk from $p_{min} < l$}
		     
	            \State\hspace{\algorithmicindent}\hspace{\algorithmicindent}\hspace{\algorithmicindent}$new\_ver = \textsc{sampleNext}(new\_ver, \mathcal{H}, \mathcal{M})$
		   
		    \State {\hspace{\algorithmicindent}\hspace{\algorithmicindent}\hspace{\algorithmicindent}$\mathcal{I} = \mathcal{I} \cup \textsc{EncodeTriplet}(w, p, new\_ver)$}
		    
	       \State{\hspace{\algorithmicindent}\hspace{\algorithmicindent}\textbf{ end for}}
		    \State {\hspace{\algorithmicindent}\hspace{\algorithmicindent}$\mathcal{I} = \mathcal{I} \cup \textsc{EncodeTriplet}(w, l, new\_ver)$}
  	        \Comment {last vertex}
		\State{\hspace{\algorithmicindent}\textbf{ end for in parallel}}
		
	    \State{\hspace{\algorithmicindent}// Evict obsolete walk triplets}
		  \State{\hspace{\algorithmicindent}\textbf{if $demanded$ then}}
	    \State {\hspace{\algorithmicindent}\hspace{\algorithmicindent}\textsc{Merge}($\mathcal{H}$, $MAV$)}
    \State{\textbf{end do in parallel}}
    
	\State {\textsc{MultiInsert}($\mathcal{H}, \mathcal{I})$}
        
	\State\Return{$\mathcal{H}$}

	\end{algorithmic} 
\end{algorithm}
\normalsize

Algorithm~\ref{alg:batch-walk-update} shows the pseudocode of the process to update random walks.
It takes as input the hybrid-tree, $\mathcal{H}$,
the $MAV$,
and the walk model $\mathcal{M}$. 
For each affected walk that appears in the $MAV$ (Line~5), we first initialize the vertex pointer for re-walking $w$ (Line~6).
We, then, re-walk from this vertex pointer, i.e.,~the vertex at the minimum affected position (Line~7), 
and fill 
the insertion accumulator $\mathcal{I}$ (Lines~8-11). 
In detail, if we are not yet at the end of $w$ (Line~11), we sample a new vertex with the new transition probability (Line~9).  
Note that depending on the utilized walk model $\mathcal{M}$ we must initialize the MH samplers~\cite{icde21-uninet} accordingly. For instance, when we use DeepWalk only the current vertex is needed for sampling the next vertex, however, when we use node2vec we need to access the previous vertex id before $p_{min}$ for initializing the samplers. 
Subsequently, we encode the triplet of the new vertex (Lines~9 \&~11). 
As a result, 
$\mathcal{I}$ maintains all the encoded walk triplets (i.e.,~integer values)
that should be 
inserted grouped by vertex  identifier.  
In Figure~\ref{fig:running-example-bwu}, we see $w_2$, which is affected, and an excerpt of the hybrid-tree, namely, vertices $v_1,v_3,v_5$. 
The newly sampled vertices (circled in green) are converted into walk triplets and then are batch-inserted to the corresponding walk-trees of $v_1$, $v_3$, and $v_5$ (green arrows indicate insertion operations). 

While we are running the re-walking process, 
we run the \textsc{Merge} process in the background to
delete the obsolete walk parts from the walk-trees (Lines~14~\&~15).
In the example of Figure~\ref{fig:running-example-bwu}, the merge process is triggered in parallel with the sampling of new vertices to delete $w_2$'s inconsistent walk triplets (circled in orange) from the corresponding walk-trees, e.g.,~of $v_3$ and $v_5$ (orange arrows show deletion operations). 
Note that merge consolidates potentially multiple walk-tree versions, e.g.,~of $v_5$ (more than one yellow trees). 
Finally, we apply the batch 
insertions of the newly generated walk parts 
(Line~17). 
Note that we  
use
the \textsc{MultiInsert}
method for applying batch updates to $C$-trees~\cite{pldi19-aspen}. 
As an outcome, the algorithm produces the hybrid-tree $\mathcal{H}$ 
with the updated walk corpus that is statistically indistinguishable from a 
corpus 
generated from scratch.

\subsection{Complexity and Correctness}
%
Let us now elaborate on the time complexity of Algorithm~\ref{alg:batch-walk-update} in terms of number of walk triplets that are inserted and deleted. 
We have to update $a = |MAV|$ affected walks. 
Precisely, we should insert in the hybrid-tree after re-walking, $|\mathcal{I}| = \sum_{i=1}^{i=a}(l-p_{min}^{i})=O(a \times l)$ walk-triplets, where $p_{min}^i$ is the minimum affected position of the $i^{th}$ affected walk in the $MAV$. 
The batch insertion is done by the \textsc{MultiInsert}~\cite{pldi19-aspen}, which has a complexity of $O(|\mathcal{I}|\log|W|)$ work overall, and $O(\log^3 |W|)$ depth, where $|W| = n * n_w * l$ is the total number of walk-triplets in a walk corpus, $n$ is the number of vertices in the graph, $n_w$ is the walks per vertex, and $l$ the length of each walk.
Therefore, the complexity 
of Algorithm~\ref{alg:batch-walk-update} is $O(|\mathcal{I}|\log|W|)$ work and $O(\log^3 |W|)$ depth. 
An inverted index-based solution 
needs 
$\Theta(\sum_{i=1}^{i=a}p_{min}^{i})$ time to construct the $MAV$, as it traverses an affected walk from its first vertex till its $p_{min}$. 
Additionally, it has to update the affected walk parts like \wharf, and thus, the total complexity is $\Theta(a \times l) = \Omega(|\mathcal{I}|)$. Hence, the complexity of an inverted index solution is 
greater than that of \wharf. 

\begin{theorem}[Correctness]
\wharf 
updates the random walks in a walk corpus, 
such that they remain statistically indistinguishable.  
\end{theorem}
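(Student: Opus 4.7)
The plan is to prove the theorem by showing that, after \textsc{BatchWalkUpdate}, the joint distribution over the walk corpus $\mathcal{W}'$ maintained in the hybrid-tree is identical to the distribution of a corpus freshly generated by model $\mathcal{M}$ on $\mathcal{G}' = \mathcal{G} + \delta\mathcal{G}$. The overall strategy is a \emph{prefix--suffix decomposition}: for each walk $w$, split it at position $p_{min}$ (using $p_{min}=l$ for walks not in the $MAV$, i.e.,\ unaffected walks), and argue separately that (i)~the prefix is correctly distributed, and (ii)~the conditional distribution of the suffix given the prefix equals the one obtained by sampling a fresh walk in $\mathcal{G}'$ from the same vertex.

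For the prefix, I would first argue completeness of the $MAV$: the construction in Section~\ref{subsec:man-computation} inserts into the $MAV$, for every edge $(s,d)\in\delta\mathcal{G}$, the earliest position at which $s$ occurs in any walk residing in the walk-tree of $s$. Because the only vertices whose out-neighbourhoods change between $\mathcal{G}$ and $\mathcal{G}'$ are sources of edges in $\delta\mathcal{G}$, every walk whose transition distribution is altered under $\mathcal{M}$ is registered, and the stored position $p_{min}$ is truly the earliest index at which the walk's transitions change. Consequently, for $j < p_{min}$ the sampling distribution $\text{prob}_{\mathcal{G}}(v_j\mid v_{j-1},\dots,v_{j-k})$ used when the walk was originally drawn coincides with $\text{prob}_{\mathcal{G}'}(v_j\mid v_{j-1},\dots,v_{j-k})$, so the prefix inherited from $\mathcal{W}$ is distributed exactly as the prefix of a freshly sampled walk in $\mathcal{G}'$.

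For the suffix, Lines~6--11 of Algorithm~\ref{alg:batch-walk-update} invoke \textsc{sampleNext} on the hybrid-tree, which now reflects $\mathcal{G}'$ (since edge-tree updates are applied before walk sampling). By the $k$-th order Markov property of Definition~\ref{def:random-walk}, the conditional distribution of the suffix $v_{p_{min}},\dots,v_{l}$ given the prefix depends only on the last $k$ vertices of the prefix and on the transition kernel of $\mathcal{M}$ on $\mathcal{G}'$; re-walking from $v_{w,p_{min}}$ draws precisely from this kernel. Combined with the prefix argument, each updated walk is distributed identically to a fresh walk in $\mathcal{G}'$, and independence across walks is preserved since re-walks use fresh randomness. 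This establishes Property~\ref{def:statistical-indistinguishability}.

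The remaining obligation is that the hybrid-tree faithfully represents $\mathcal{W}'$. This follows from three auxiliary facts I would invoke: (a)~Szudzik pairing is a bijection, and the map $(w,p)\mapsto w\cdot l + p$ is injective for $p\le l-1$, so the encoding/decoding of triplets in Section~\ref{subsec:pairing-triplets} is lossless; (b)~\textsc{Merge} deletes exactly the triplets whose $(w,p)$ pair satisfies $p \ge p_{min}$ for some $w\in MAV$, because decoding via $Szudzik^{-1}$ and $f^{-1}$ recovers $(w,p)$ exactly; (c)~\textsc{MultiInsert} on the $C$-trees places each new encoded triplet under the walk-tree of the correct vertex, as proven in~\cite{pldi19-aspen}. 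I expect the main obstacle to be the suffix argument for higher-order models such as node2vec, where the transition kernel looks back $k>1$ steps: one must verify that the $k$ vertices immediately preceding $p_{min}$ are themselves drawn from the correct distribution and are used to seed the samplers, so that the resampled suffix truly matches a fresh walk conditioned on the prefix rather than only on the vertex at $p_{min}-1$.
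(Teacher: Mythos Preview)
Your proposal is correct and, at its core, rests on the same fact the paper's proof sketch establishes: the $MAV$ captures every walk whose transition probabilities change under $\delta\mathcal{G}$ and records the earliest such position. The paper, however, organizes the argument differently: it fixes a walk $w$ and a single updated edge $(s,d)$ and does a case analysis on whether $s$ and/or $d$ occur in $w$, arguing (for first- and second-order models separately) that if neither endpoint appears then all transition probabilities along $w$ are unchanged, and if one does appear then \wharf finds it in the corresponding walk-tree and marks $w$ as affected. Your prefix--suffix decomposition is essentially the contrapositive of the paper's Case~1 packaged around $p_{min}$, but you go further by explicitly invoking the $k$-th order Markov property for the resampled suffix and by checking the auxiliary obligations (bijectivity of the encoding, correctness of \textsc{Merge}/\textsc{MultiInsert}, proper seeding of higher-order samplers). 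The paper's sketch is terser and really only argues $MAV$-completeness, leaving the distributional equivalence and the data-structure faithfulness implicit in the algorithm description; your route ties the argument more directly to Property~\ref{def:statistical-indistinguishability}.
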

\begin{proof}[Proof sketch]
\label{prop:proof-statistically-indistinguishability}
Fix a random walk $w \in \mathcal{W}$, where $\mathcal{W}$ is the maintained walk corpus. 
Let $(s,d)$ be an undirected edge that gets inserted (w.l.g.) into the graph. 
We discern the following two cases:
\begin{packed_enum}
\item $s \notin w$ and $d \notin w$. 
As none of the two endpoints of the incoming edge are ``covered'' by $w$, 
the transition probabilities with which $w$ was sampled, using an walk model of up to second-order,
do not change.
Specifically, in first-order walks (e.g.,~DeepWalk), a vertex $v$ uniformly samples one of its neighbors as the next vertex in $w$.
Thus, the transition probabilities between vertices in $w$ stay intact and hence $w$ remains valid. 
In second-order walks (e.g.,~node2vec), a vertex $v$ (with $v_{prev}$ as the previous vertex in $w$) non-uniformly samples one of its neighbors as the next vertex in $w$: It does so with a probability that depends on whether the next vertex is (or isn't) connected with $v_{prev}$, or the next vertex is actually $v_{prev}$~\cite{kdd16-node2vec}. As $s \notin w$ and $d \notin w$, the transition probabilities of $w$ remain intact.

\item $s \notin w$ but $d \in w$.
When one endpoint is not ``covered'' by $w$, \wharf incorporates $d$ into the edge-tree of vertex $s$ as well as $s$ into the edge-tree of vertex $d$ right after the insertion of the undirected edge $(s,d)$.
\wharf also checks the walk-tree of $s$, and the one of $d$, where it finds the corresponding walk triplet belonging to $d$, and thus, identifies that $w$ is affected and proceeds to update it. 
This holds for both first- and second-order walks.

Notice that, a similar reasoning holds in case ($s$, $d$) is a deletion. 
\end{packed_enum}
\end{proof}

\section{Experimental Evaluation}
\label{sec:experiments}
We evaluate \wharf using a variety of large-scale real-world and synthetic graphs and 
investigate: how efficient it is in terms of throughput, latency, and space; how it scales to large graphs and batch sizes; how it behaves in the presence of data skew; how its range search and merge policy drives its performance; and whether it can enable high accuracy on downstream 
tasks. 

\begin{myboxi}
Overall, we found that \wharf: (i)~has superior throughput and latency compared to the baselines making it suitable for streaming scenarios, (ii)~is space-efficient thanks to its walk-tree structure, (iii)~scales with both the graph and batch size and is robust to skew, and~(iv) is always better than recomputing the walks from scratch in contrast to the baselines.
\end{myboxi}

\subsection{Setup} 
\label{subsec:setup}

\textbf{Hardware.} 
We ran our experiments on a server with a $24$-core Intel(R) Xeon(R) Gold $6126$ CPU @ $2.60$GHz with 2-way hyperthreading and $1.5$TB of main memory. 
Our prototype uses the work-stealing scheduler in \cite{pldi19-aspen}, which is implemented similarly to Cilk for parallelism. 
We compiled our programs with the \texttt{g++} compiler (version 9.2.1) having the \texttt{-O$3$} flag 
and ran all our experiments five times and report the average.

\noindent \textbf{Implementation.} 
We implemented \wharf in C++20 on top of Aspen~\cite{pldi19-aspen} and
used weight-balanced trees as the underlying balanced tree implementation \cite{spaa16-just-joins,ppopp18-pam}.
Note that Aspen's current implementation supports storing up to $64$-bit integers in $C$-trees. Consequently, each 
Szudzik operand in \wharf should be up to $32$ bits. 
We stress that this is not a limitation of \wharf, but of the $64$-bit implementation of Aspen (on which we built). 
Yet, \wharf can still support much larger graphs for PPR use cases where walks are shorter (around $5$-$15$ vertices long).
As explained in~\cite{vldb10-goel}, the theoretical guarantees are preserved for $n_w = 10$ and $l = 10$, so \wharf can scale to graphs with up to $(2^{32}-1)/100 \approx 42.94$M vertices, such as the \textit{Twitter} dataset~\cite{www10-twitter} as we show in our experiments. 


\small
\begin{table}[h!]
    \begin{center}        
    \caption{Datasets Statistics.}
     \vspace{-0.4cm}
        \begin{tabular}{l|r|r|S}

            \hline
            
            \textbf{Graph} & \textbf{Num. Vertices} & \textbf{Num. Edges} & \textbf{Avg. Degree} \\

            \hline
            \hline
            
            \textit{com-YouTube} & 1,134,890 & 2,987,624 & 5.3       \\
            \textit{soc-LiveJournal} & 4,847,571 & 85,702,474 & 17.8 \\
            \textit{com-Orkut} & 3,072,627 & 234,370,166 & 76.2      \\
            \textit{Twitter} & 41,652,230 & 1,468,365,182 & 57.7     \\
            
            \hline
            
            
            
            
            
        \end{tabular}
        \label{tab:datasets}
  \end{center}
\end{table}
\normalsize

\noindent \textbf{Datasets.} 
We used four real-world (Table~\ref{tab:datasets}) and eight synthetic graph datasets:
\textit{Real Graphs} -- \textit{\textbf{com-Youtube}} is an undirected graph of the Youtube social network \cite{icdm12-jure}, \textit{\textbf{soc-LiveJournal}} is a directed graph of LiveJournal social network \cite{kdd06-kleinberg}, \textit{\textbf{com-Orkut}} is an undirected graph of Orkut social network \cite{icdm12-jure}, and \textbf{\textit{Twitter}} is a directed graph of the Twitter network, in which edges model the follower-followee relationship~\cite{www10-twitter};
\textit{Synthetic Graphs} -- We generated large-scale synthetic graphs sampled from the R-MAT model \cite{sdm04-rmat}. 
Specifically, we used the TrillionG\footnote{https://github.com/chan150/TrillionG} \cite{sigmod17-trillionG} tool to generate  
\textit{\textbf{Erd\H{o}s R\'{e}nyi, er-$k$,}} graphs  with $2^k$ nodes, uniformly distributed edges and witha an average vertex degree of $100$ by setting the R-MAT parameters to $a = b = c = d = 0.25$.
Additionally, we varied $k$ from $16$ to $22$ to evaluate the scalability of \wharf.
We also generated a set of skewed graphs, \textit{\textbf{sg-$s$}}, with $2^{20}$ nodes with an average degree of $10$, 
while varying the skew.
We set the R-MAT parameters ($a$, $b$, $c$, $d$) so that the number of edges in the bottom-right part of the matrix is about $s$ times 
the top-left part of the matrix.
We set $b = c = 0.25$. 
Thus, when $s = 1$, there is no skew, while if $s > 1$,~R-MAT generates power-law graphs. 
We varied $s$ from $1$ to $7$ with a step of $2$. 
\noindent \textbf{Baselines.} 
As there is no system that maintains streaming random walks,
we compared \wharf with approaches proposed in~\cite{arxiv19-incremental-node2vec, cnta19-evonrl},
which use an inverted index for maintaining walks. We call this baseline Inverted Index-based (\texttt{II-based}).
Specifically, II-based maintains the walks separately from the graph in sequences of vertices stored in vectors. 
We used a dictionary for storing the walks, where the key is the walk id and the value is the walk sequence. 
Additionally, II-based maintains an inverted index that relates a vertex id with the set of walk ids it participates. 
For fairness reasons, we implemented a fully parallel version of II-based by utilizing concurrent hashtables\footnote{\url{https://github.com/efficient/libcuckoo}}. 
We also used a \texttt{Tree-based} baseline, which stores the walk-triplets into parallel balanced binary trees (a.k.a.~parallel augmented maps)~\cite{ppopp18-pam}; a structure that provides highly parallel operations and upon which $C$-trees are built. 
\subsection{Overall Performance}
\label{subsec:overall}

\noindent{\bf Throughput \& Latency.} We compare \wharf with II-based and Tree-based in terms of throughput, i.e.,~number of updated walks per second, and latency, i.e.,~the average time for updating one walk, when updating random walks.
If not stated otherwise, we use the DeepWalk~\cite{kdd14-deepwalk} walking model with the default parameters, i.e.,~$n_w = 10$ walks per vertex of length $l=80$. 
For this experiment, we used the real graphs and generated 
walks of default length 
for the first three real datasets, whereas $l=10$ for twitter dataset. 
We also produced batches of $10,000$ edges, which we sampled based on the R-MAT~\cite{sdm04-rmat} model with parameters $a = 0.5$, $b = c = 0.1$, and $d = 0.3$ to induce graph updates as in~\cite{pldi19-aspen}. 
We inserted $10$ such batches 
in total and report the average
throughput and latency.

Figure~\ref{fig:real-throughput-latency} illustrates 
the throughput and latency 
results.
We observe 
that \wharf is superior to both baselines in all cases. It achieves up to $\sim\!\! 2.6\times$ higher throughput 
and $2\times$ lower latency, which is crucial for streaming applications. 
This is thanks to its parallel \textsc{MultiInsert} and \textsc{Merge} operations that
update different parts of the walk corpus on the hybrid-tree simultaneously. 
The advantage of \wharf is more evident for the datasets where we used larger walk lengths, such as \textit{Livejournal}.
Contrary to what one may think the Szudzik encoding/decoding function calls required only $3.66$\% of the total time for walk updates in \textit{com-YouTube}, $10.055$\% in \textit{soc-LiveJournal}, $7.797$\% in \textit{com-Orkut}, and  $12.815\%$ in Twitter.

Also, \wharf's default (on-demand) policy for merging allows it to achieve maximum throughput by only merging at the last batch.\footnote{As expected, there exists a throughput-memory trade-off: one can achieve higher throughput at the price of a higher memory footprint by merging less frequently.} 
On the contrary, even though II-based uses parallelism, maintaining the walks in sequences that should be scanned to get updated, leads to reduced throughput 
due to thread contention. 
Tree-based achieves poor throughput because of re-walking obsolete parts of affected walks to remove them. 

During our experiments we observed that the total time and throughput of updating the walks 
for edge deletions is within $10\%$ of the time required for edge insertions. 
To illustrate this, we generated $5$ batches of edges and for each batch we alternately applied insertion and consequently deletion, where each triggers updates of afffected walks.

Figure~\ref{fig:mixed-workload} shows the throughput of updating walks on \textit{soc-LiveJournal} due to insertions (I) and deletions (D) for batches of $10$K and $100$K edges.
As shown, the throughput of deletions is similar to that of insertions. 
We got similar results for the other real datasets. 
In the sequel, we show results only for edge insertions.
 
We thus conclude that \emph{\wharf is superior than the baselines and its high throughput makes it suitable for streaming graphs.} 

\begin{figure}[t]
  \begin{subfigure}[t]{0.23\textwidth} 
    \includegraphics[width=\textwidth]{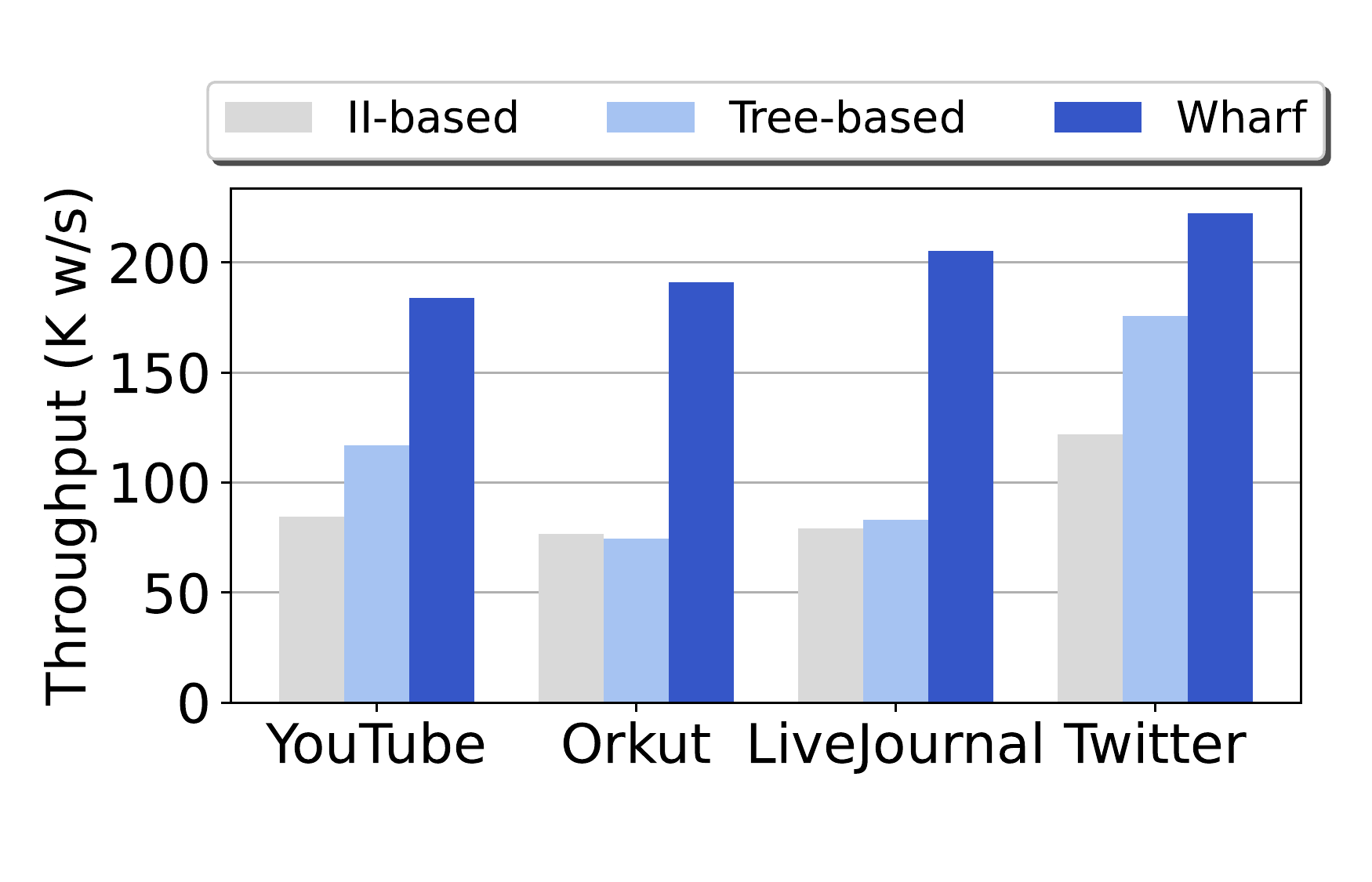}
    \caption{Throughput} 
    \label{fig:real-throughput}
  \end{subfigure}
    \begin{subfigure}[t]{0.23\textwidth} 
    \includegraphics[width=\textwidth]{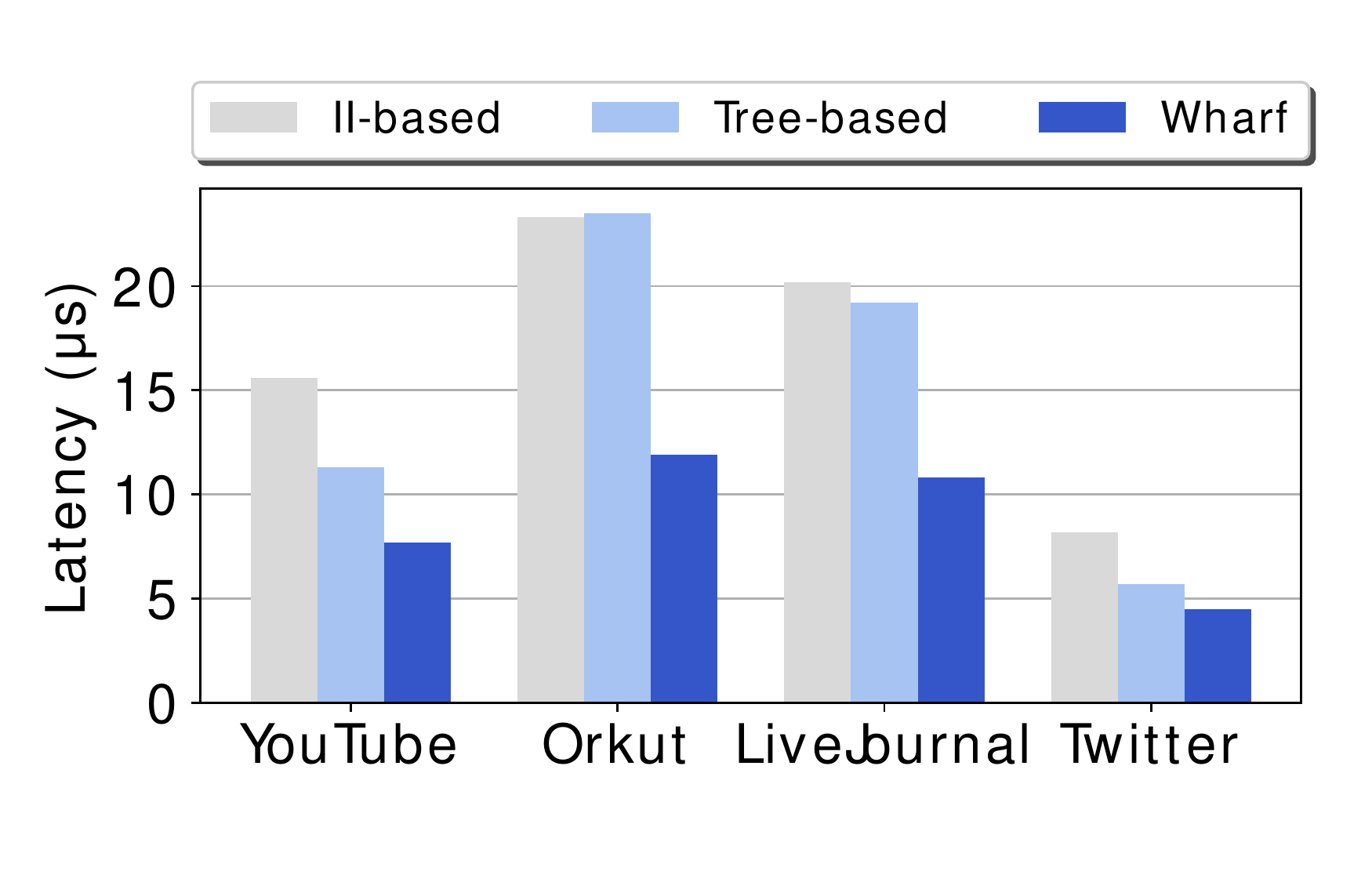}
    \caption{Latency} 
    \label{fig:real-latency}
  \end{subfigure}
  \vspace{-0.2cm}
  \caption{Performance of \wharf on real graphs.}
  \label{fig:real-throughput-latency}
  \vspace{-0.4cm}
\end{figure}

\begin{figure*}[t!]
\begin{minipage}{0.25\textwidth}
    \centering 
    \begin{subfigure}{\linewidth}
    \vspace{-0.4cm}
    \centering
    \includegraphics[width=\linewidth]{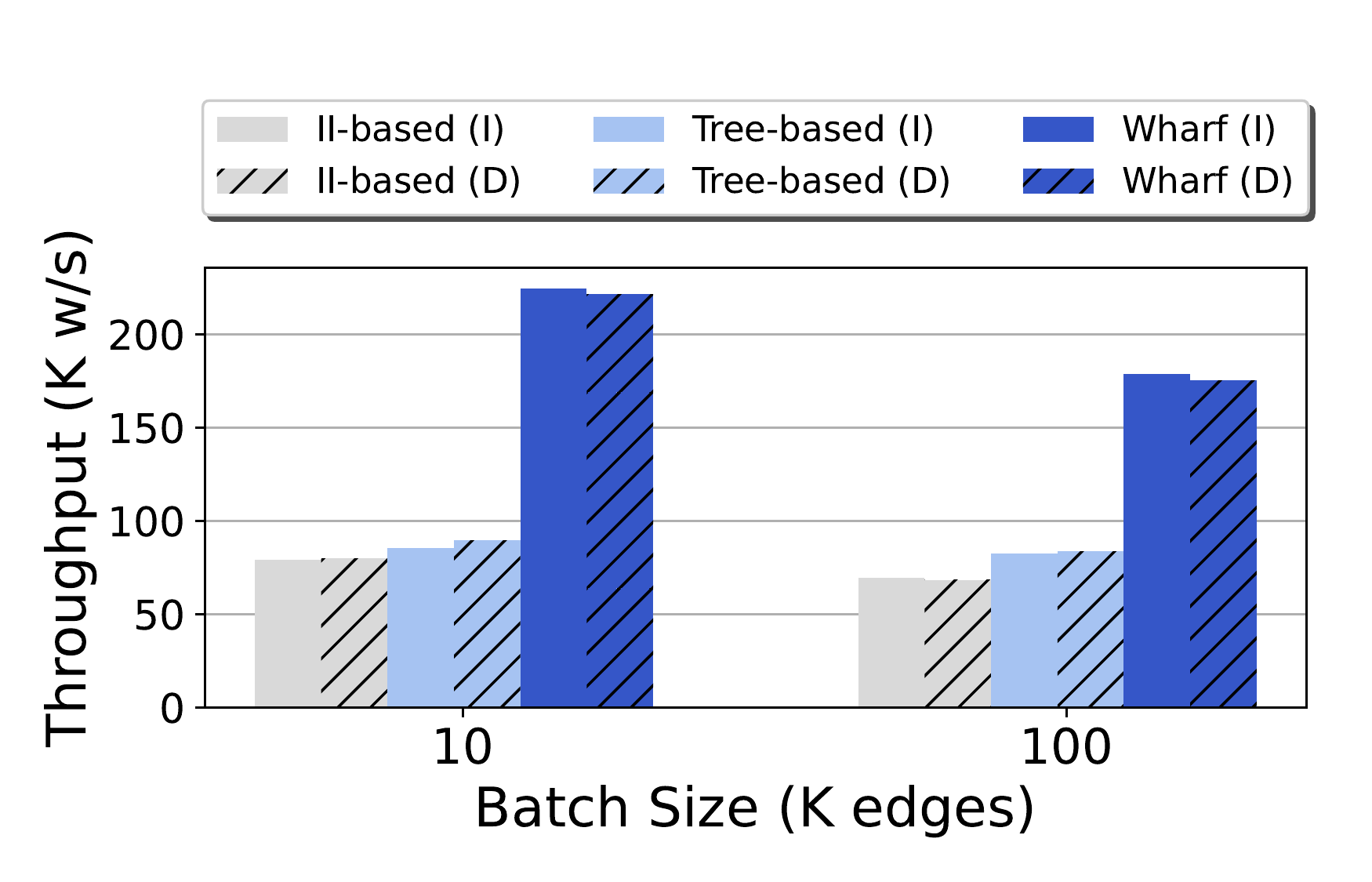} 
    \caption{\textit{LiveJournal}}
    \vspace{-0.2cm}
    \end{subfigure}
    \caption{Mixed workload.}
    \label{fig:mixed-workload} 
\end{minipage}
\begin{minipage}{0.74\textwidth} 
    \begin{subfigure}{0.32\textwidth}
    \centering
    \includegraphics[width=\linewidth]{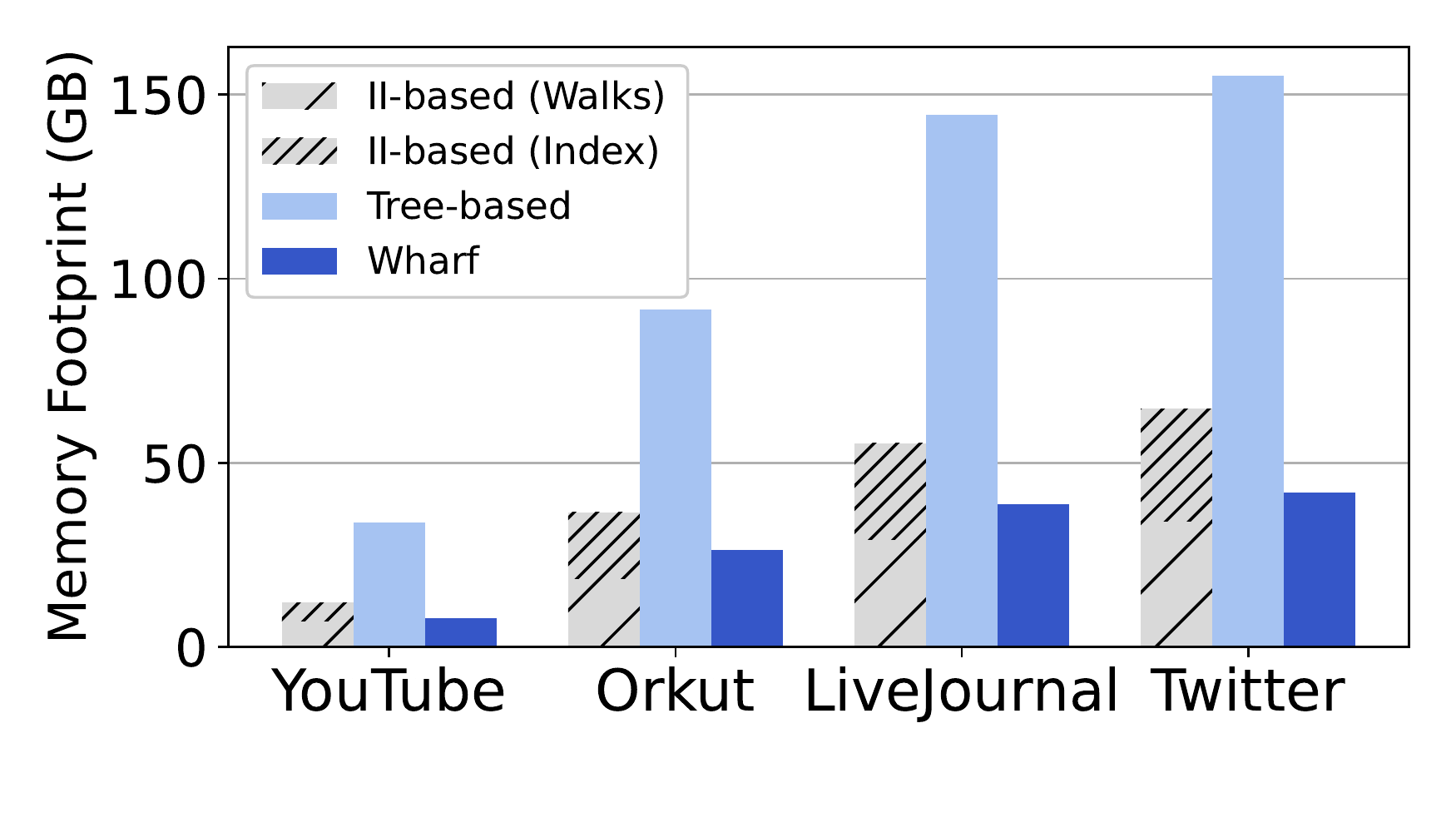}  
    \captionsetup{justification=centering}
    \caption{Real Datasets}
    \label{fig:total-memory-footprint}
    \end{subfigure}
    \begin{subfigure}{0.32\textwidth}
    \centering
    \includegraphics[width=\linewidth]{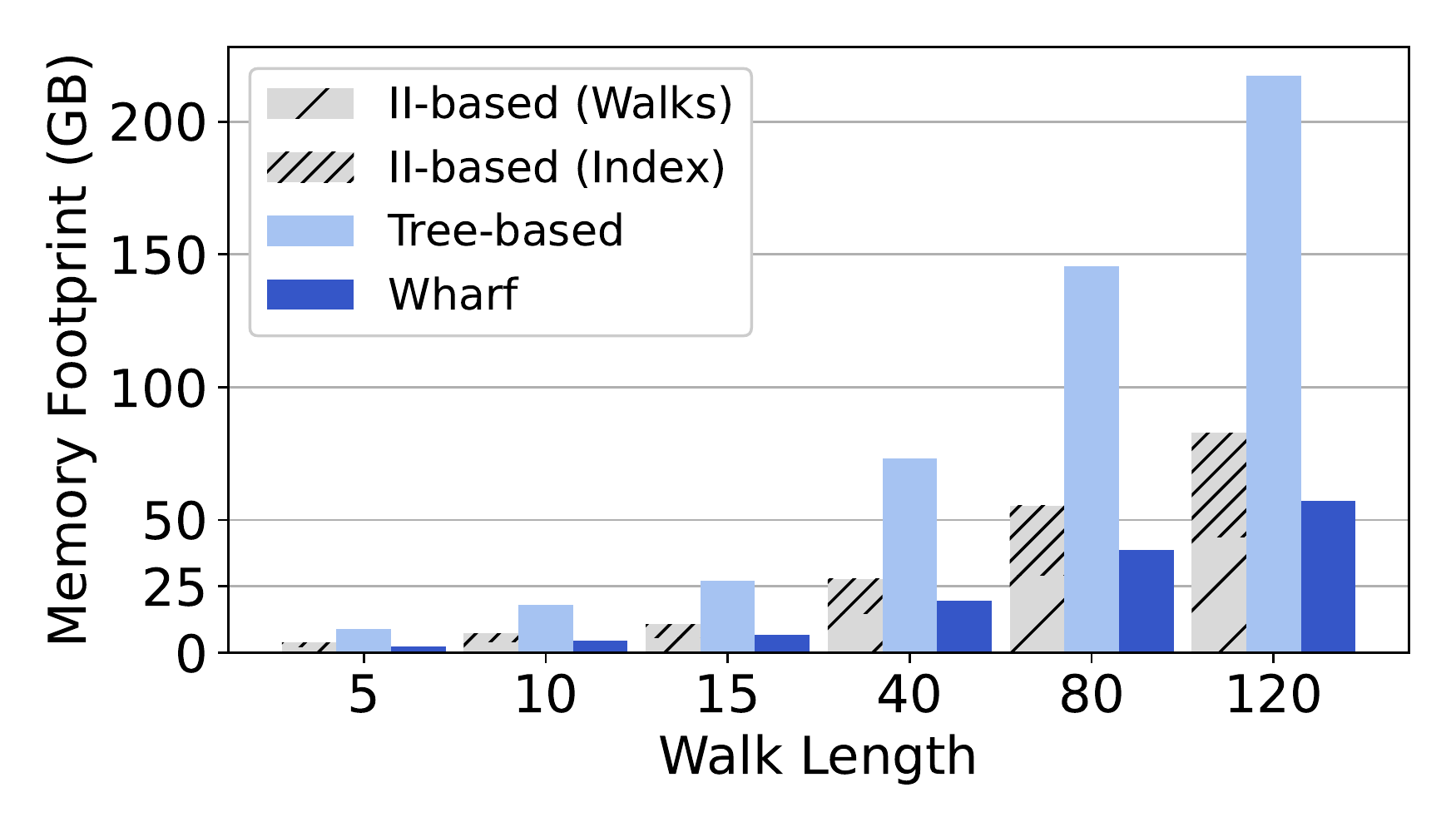}
    \captionsetup{justification=centering} 
    \caption{\textit{LiveJournal}, varying $l$, $n_w=10$}
    \label{fig:varying-l}
    \end{subfigure}
    \begin{subfigure}{0.32\textwidth}
    \centering
    \includegraphics[width=\linewidth]{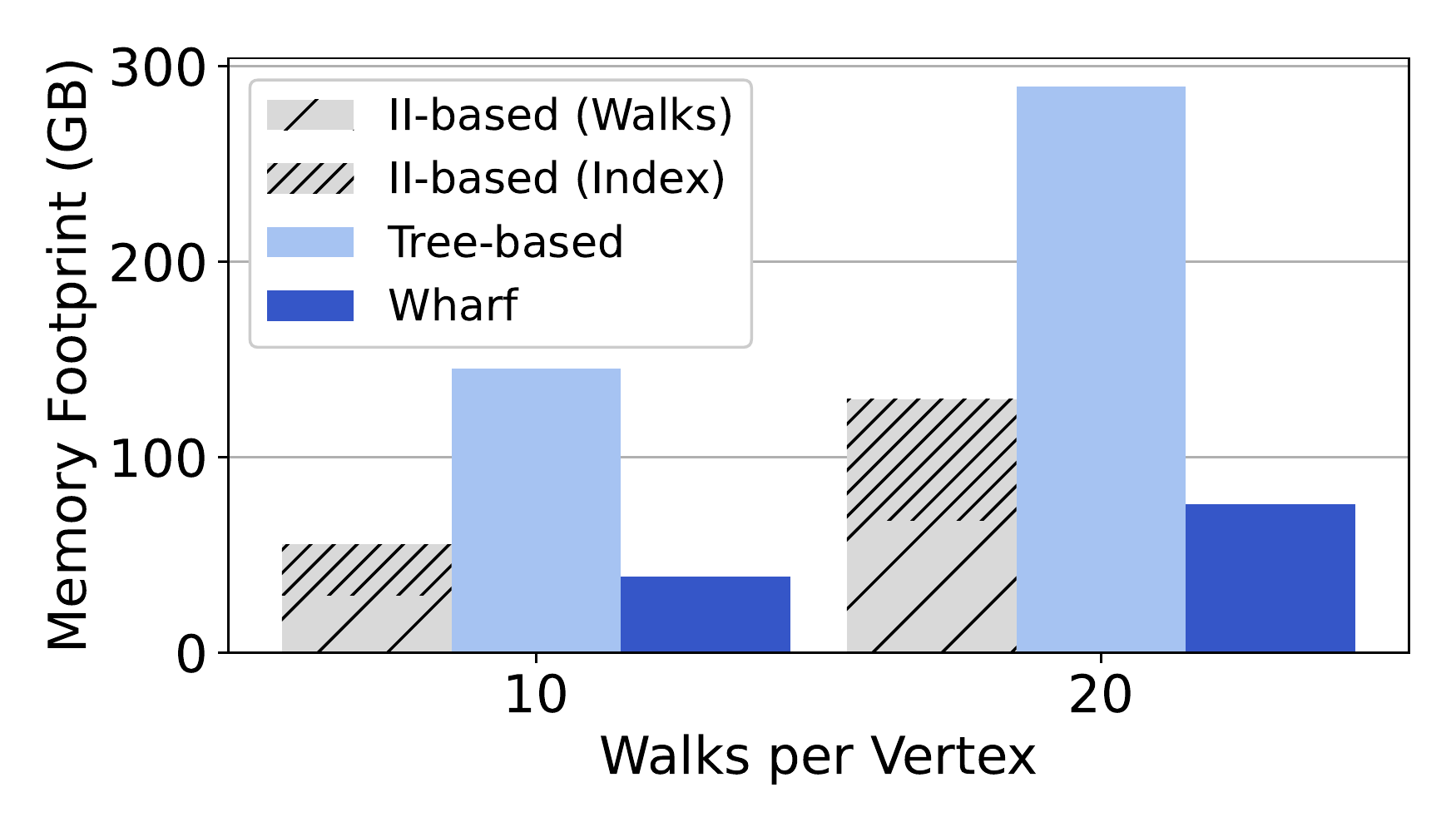}
    \captionsetup{justification=centering}
    \caption{\textit{LiveJournal}, varying $n_w$, $l=80$}
    \label{fig:varying-w}
    \end{subfigure}
\caption{Memory footprint of \wharf on real graphs.}
\label{fig:SA}
\end{minipage}\hfill
\vspace{-0.4cm}
\end{figure*}

\vspace{0.1cm}
\noindent{\bf Memory Footprint.} We also 
compare the memory footprint of \wharf with that of II-based and Tree-based.
We aim to compare the compression capabilities of \wharf's data structure irrespective of the merge policy, and thus, we report the memory needed after merging.
Note that, we also compare \wharf with KnightKing~\cite{sosp19-knightking}: 
\wharf requires less than 30\% more storage.
Yet, we
focus only on II-based and Tree-based, because KnightKing (i)~requires to build the entire graph after every single update,
(ii)~does not store random walks in any structure but outputs them in raw files, and (iii)~offers neither any efficient search nor update capabilities for the walks.

Figure~\ref{fig:total-memory-footprint} shows the total space that \wharf needs to store the walk corpus.
We show a breakdown of the memory needed by II-based to store the walks and the memory needed to store the inverted index.
We observe that 
\wharf can store the walks with up to $1.7\times$ less space than II-based. 
Especially, we observe that \wharf stores its walks 
using only $10.22-29.54\%$ more space than the space II-based uses for storing {\em only} the walks.
For instance, in \textit{soc-Livejournal}, 
II-based requires $29.25$ GB for the walks and $26.09$ GB for the inverted index, whereas \wharf stores the walks, which are implicitly indexed, using $38.83$ GB. 
As for the Tree-based, we observe that its memory footprint is $\sim\!\!3.5-4.4\times$ higher than \wharf's, because it stores the walk-triplets without any compression.

Figures~\ref{fig:varying-l} and~\ref{fig:varying-w} 
illustrate the total memory footprint when varying the walk length $l$ and the number of walks per vertex $n_w$. We report the results only for the \textit{soc-LiveJournal} dataset because we observed the same behaviour for the other real graphs.
Figure~\ref{fig:varying-l} shows a linear behaviour in terms of space consumption with respect to the walk length.
\wharf requires on average $\sim\!\!1.5\times$ less space than II-based
and $\sim\!\!3.76\times$ less space than Tree-based.
Figure~\ref{fig:varying-w}
shows again a linear behaviour with respect to $n_w$, i.e.,~ \wharf has on average $\sim\!\!1.6\times$ smaller  memory footprint compared to II-based and $\sim\!\!3.77\times$ smaller than Tree-based.
These space-savings are thanks to the use of pairing functions in combination with differential encoding in the chunks of walk-trees. 
The reader might think of applying a simple compression technique in II-based, but existing techniques for compressing inverted indexes are neither trivial nor suitable for dynamic data~\cite{pibiri2020techniques}.
Therefore, we conclude that \emph{our proposed walk-tree structure enables \wharf to store an indexed walk corpus space-efficiently.}


Based on all the results above, we decided to discard the Tree-based baseline, and keep only the II-based baseline, in the subsequent experiments: II-based is comparable to Tree-based, in terms of throughput, while occupying much less space.

\subsection{Scalability}
\label{subsec:scalability}


\noindent{\bf Batch Size.} We now demonstrate \wharf's scalability when varying the batch size for edge insertions. We produced batches with sizes $10$, $25$, $50$, $75$, and $100$ thousand edges that we inserted in \textit{com-Orkut}. Notice that we omit the results for the other two real datasets because they follow the same trend as for \textit{com-Orkut}. 

\begin{figure}[t]
  \begin{subfigure}[t]{0.23\textwidth} 
    \includegraphics[width=\textwidth]{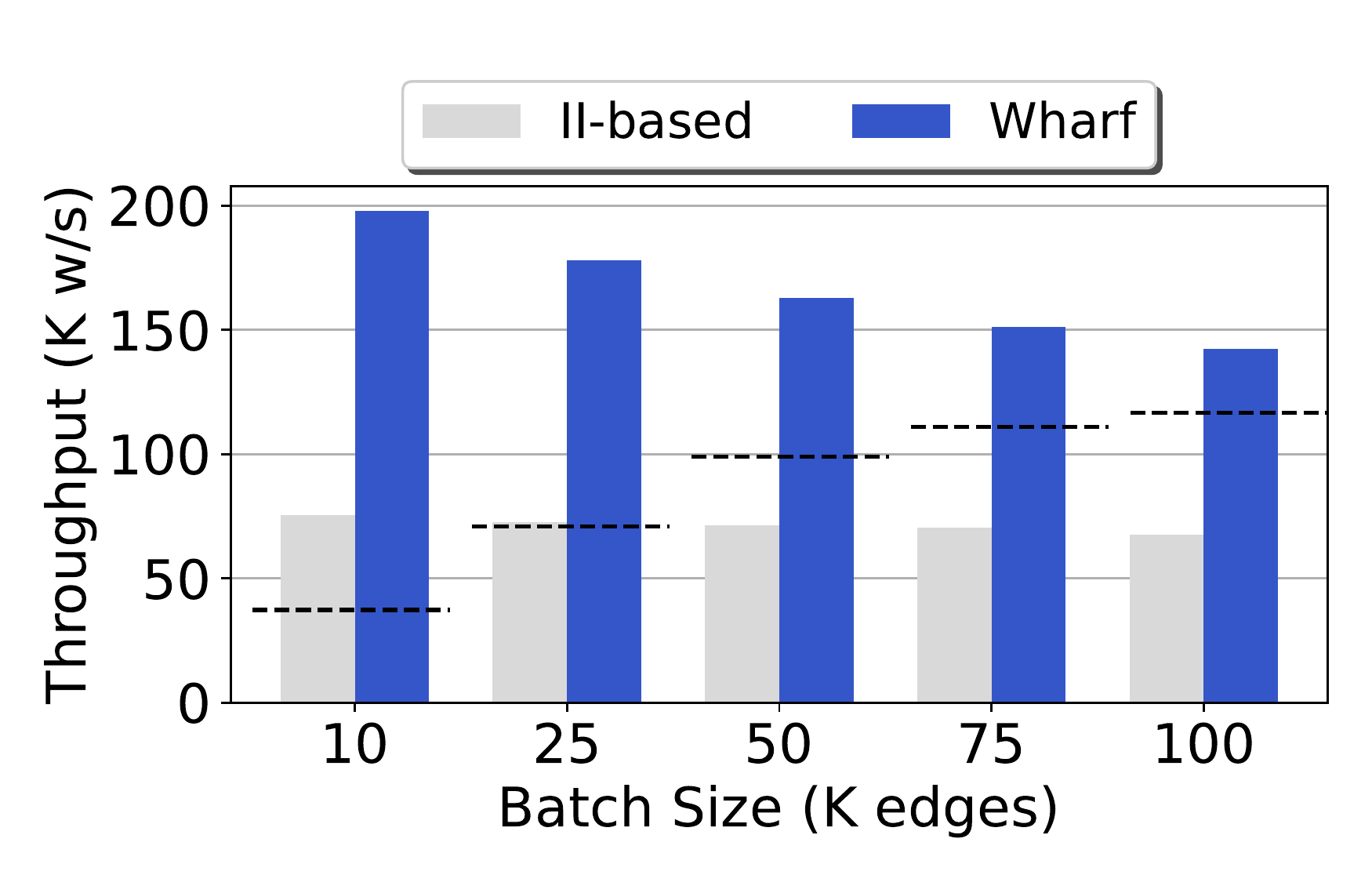}
    \caption{Throughput}
    \label{fig:orkut-throughput-varyingB}
  \end{subfigure}
  \begin{subfigure}[t]{0.23\textwidth} 
    \includegraphics[width=\textwidth]{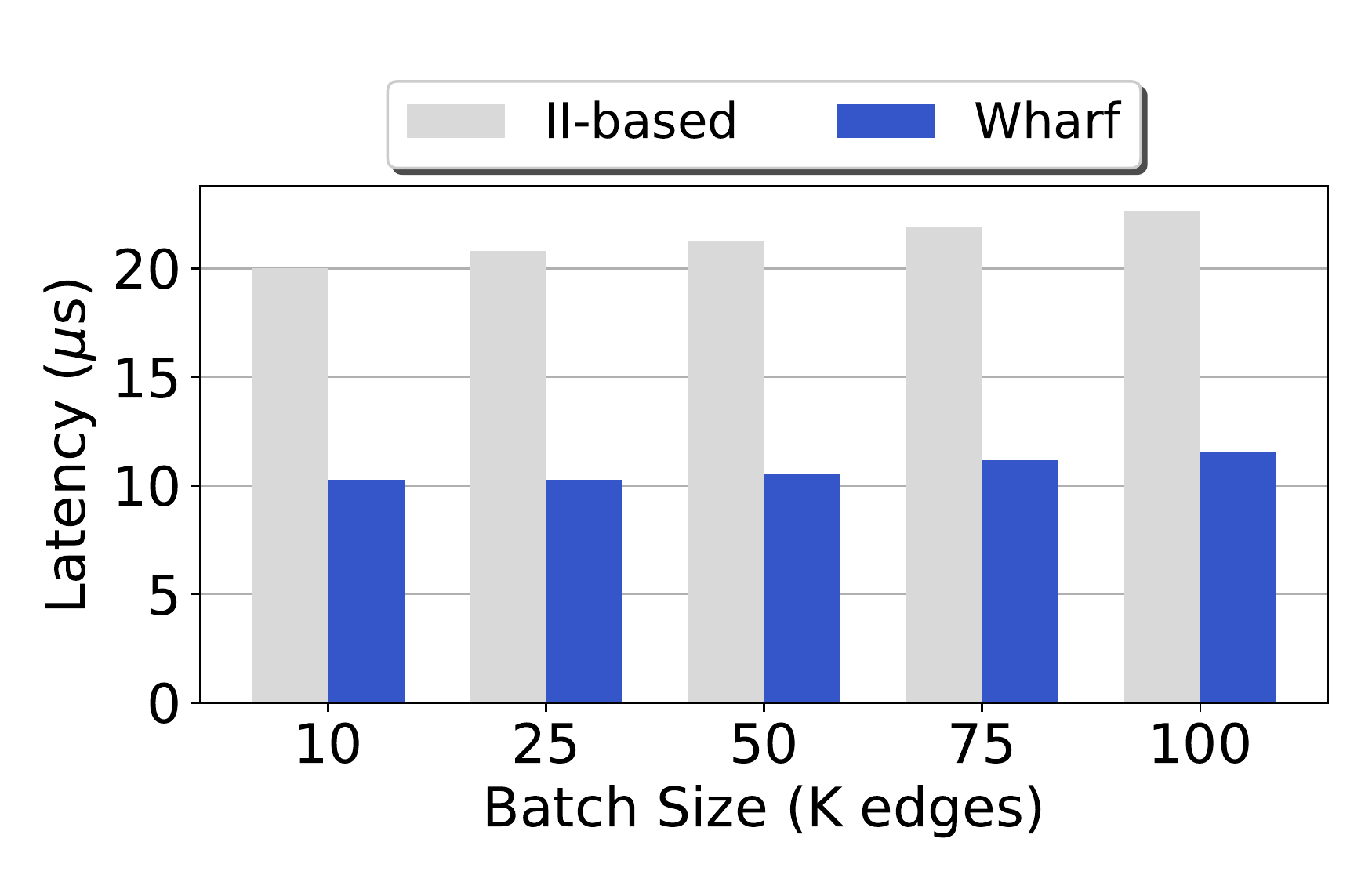}
    \caption{Latency}
    \label{fig:orkut-latency-varyingB}
  \end{subfigure} 
  \vspace{-0.2cm}
  \caption{Scalability as the batch size increases on \textit{com-Orkut}.}
  \label{fig:orkut-throughput-latency-varyingB}
  \vspace{-0.2cm}
\end{figure}

Figure~\ref{fig:orkut-throughput-varyingB} illustrates the throughput results, where the black horizontal lines represent the minimum throughput required to generate the walks from scratch.
We observe that \wharf is always better than recomputing the random walks from scratch, which is not the case for II-based for batch sizes larger than $25K$. 
This is because II-based
can only perform $72.3$K walk updates per second. 
In general, \wharf achieves up to $\sim\!\!2.6\times$ higher throughput than II-based.  We also observe that the throughput of both \wharf and of II-based decreases as the batch size increases, namely, $\sim\!\!22.4\%$ for \wharf and $\sim\!\!10.6\%$ for II-based going from batch size of $10$K to $100$K edges.
The reason is that the larger the batch size is the higher the average number of affected walks is.
\wharf's throughput decreases a bit more because 
(i)~the time to compute the $MAV$ increases as the walk trees get larger with larger batch sizes, and (ii)~the minimum position of the affected walks decreases.
During our experiments, we observed that as the batch size increases not only the number of affected walks increases, but also more and more walks are affected at an earlier point of the walk sequence. This leads to more work for updating the random walks. 
For instance, while inserting $50K$ edges leads to $\sim\!\!463$K affected walks from the first position, inserting $100$K edges leads to $\sim\!\!874$K affected walks from the first position. 

Figure~\ref{fig:orkut-latency-varyingB} illustrates the latency results. 
We observe that the latency of \wharf is $\sim\!\!2\times$ lower than the one of II-based. 
We also see that both \wharf's and II-based latency increases as the batch size increases 
because of the increased number of walks.
Yet, \wharf's latency stays considerably low thanks to its on-demand policy for merging.
We thus conclude that \emph{\wharf is more scalable than the baseline as it can keep pace even in highly streaming scenarios with many updates per batch}.



\vspace{0.1cm}
\noindent{\bf Input Graph Size.} We also study \wharf's scalability when varying the input graph size w.r.t. the number of vertices. For this experiment, we used the \textit{er}-graphs and fixed the batch size to $10K$ edges. 
Figures~\ref{fig:er-throughput} and~\ref{fig:er-latency} illustrate the throughput and latency, respectively. 
We see that \wharf achieves $1.9-2.5\times$ higher throughput than II-based, and $\sim\!\!1.8-2\times$ lower latency.
This is attributed to two things: (i)~the way \wharf stores the walks in the hybrid-tree that enables updating various parts of the corpus simultaneously, and (ii)~its on-demand merge policy.
Furthermore, we observe that, as the distribution of vertex degree in the $er$-graphs is uniform, the throughput of both solutions remains steady: $\sim\!\!115$K walks/second for \wharf and $\sim\!\!50$K walks/second for II-based.
Consequently, the number of affected walks increases proportionally to the graph size.
We conclude that \emph{it nicely scales with varying graph sizes}.

\begin{figure}[t]
  \begin{subfigure}[t]{0.22\textwidth}
    \includegraphics[width=\textwidth]{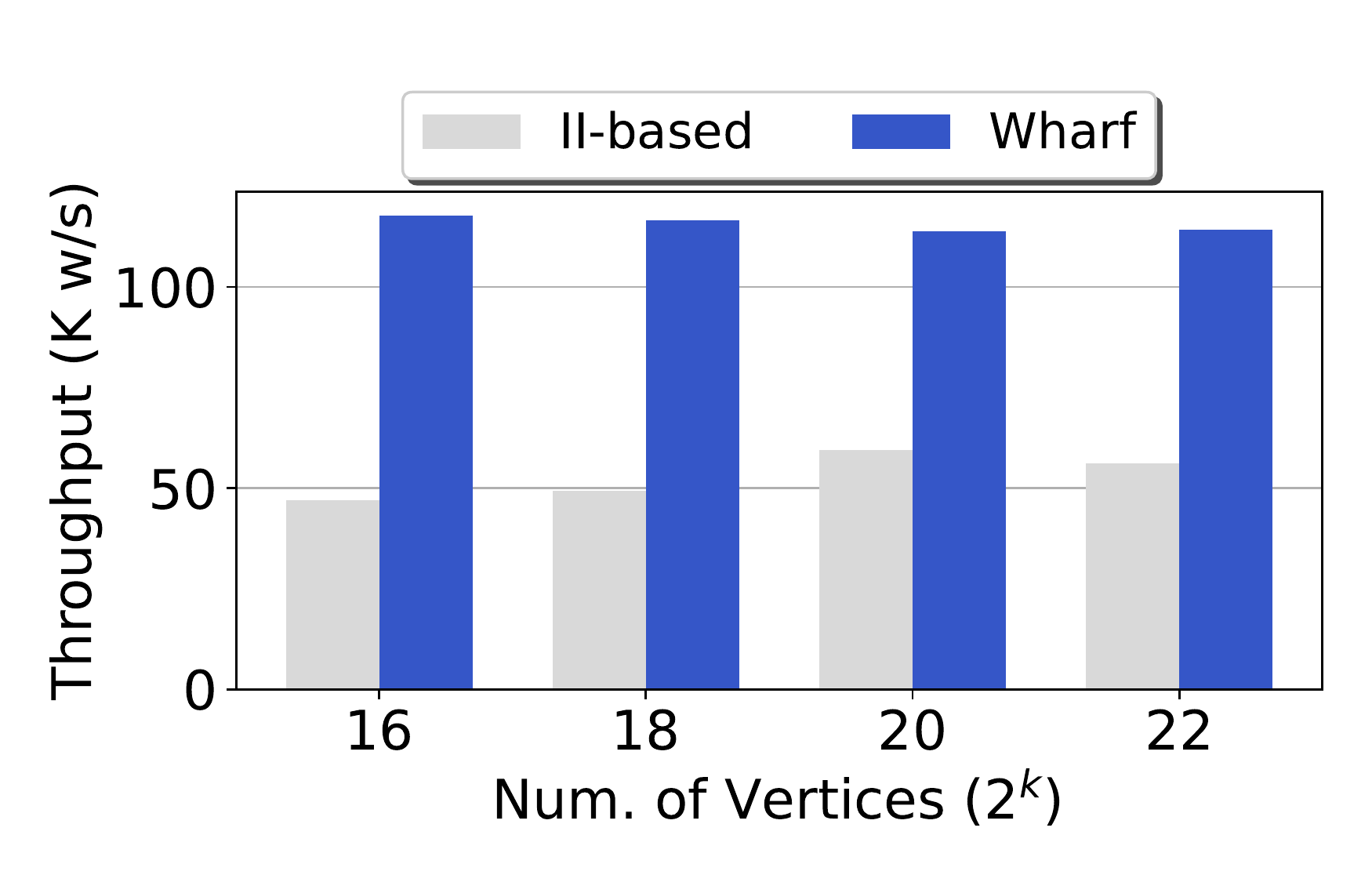}
    \caption{Throughput}
    \label{fig:er-throughput}
  \end{subfigure}
  \begin{subfigure}[t]{0.22\textwidth}
    \includegraphics[width=\textwidth]{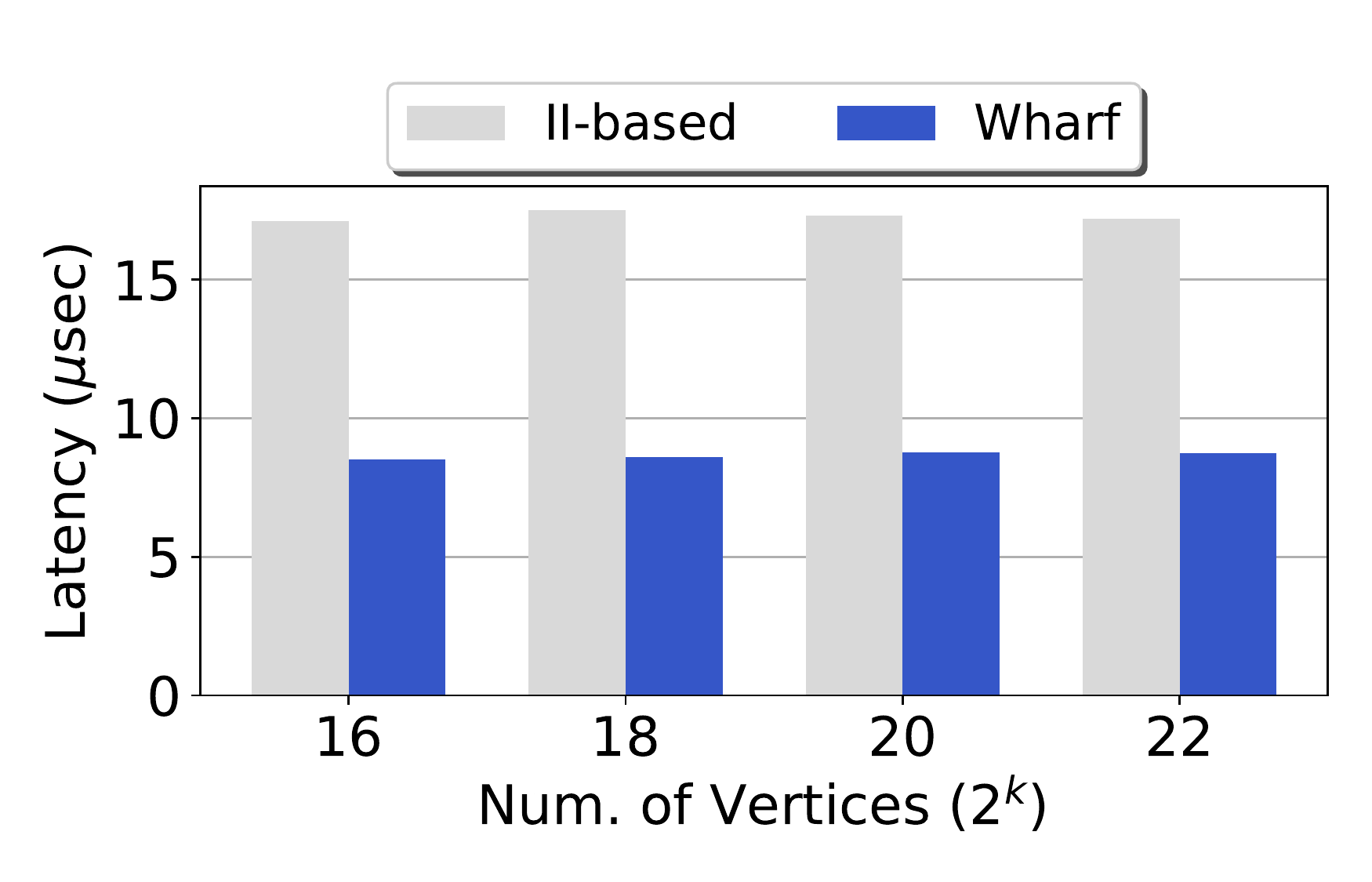}
    \caption{Latency}
    \label{fig:er-latency}
  \end{subfigure}
  \vspace{-0.2cm}
  \caption{Scalability as the graph size increases on $er$-graphs.}
\vspace{-0.4cm}
\end{figure}

\subsection{Performance under Data Skewness}
\label{subsec:skewness}
Next, we investigate the effect of graph skew on \wharf's performance in terms of
throughput and
memory footprint.
Specifically, we use a set of skewed graphs, $sg$-$s$, which all have $2^{20}$ vertices, where  we vary the skew factor $s=1, 3, 5, 7$. 
We set the batch size to $10K$ edges, and for each graph, we generate the edge updates using RMAT such that they follow the same distribution as the graph. 

Figure~\ref{fig:sg-throughput} depicts the throughput that \wharf and II-based achieve while performing walk updates. 
Recall that the black horizontal lines in the figure show the minimum throughput required to generate the walks from scratch.
We observe that \wharf has up to $\sim\!\!2\times$ better throughput than II-based. Actually, II-based not only has low throughput, but also needs more time to update the walks than generating them from scratch for $s \geq 3$. 
This is because the more skew in the graph the more often the high degree nodes appear in random walks, and thus, more random walks get affected.
Therefore, II-based falls short as it has to update the walk sequences and the walk index.   
Notice that the throughput of both \wharf and II-based decreases by $\sim\!\!18\%$ when going from $s=1$ to $s=7$, yet \wharf's throughput remains sufficiently high.

\begin{figure}[t]
  \begin{subfigure}[t]{0.23\textwidth}
    \includegraphics[width=\textwidth]{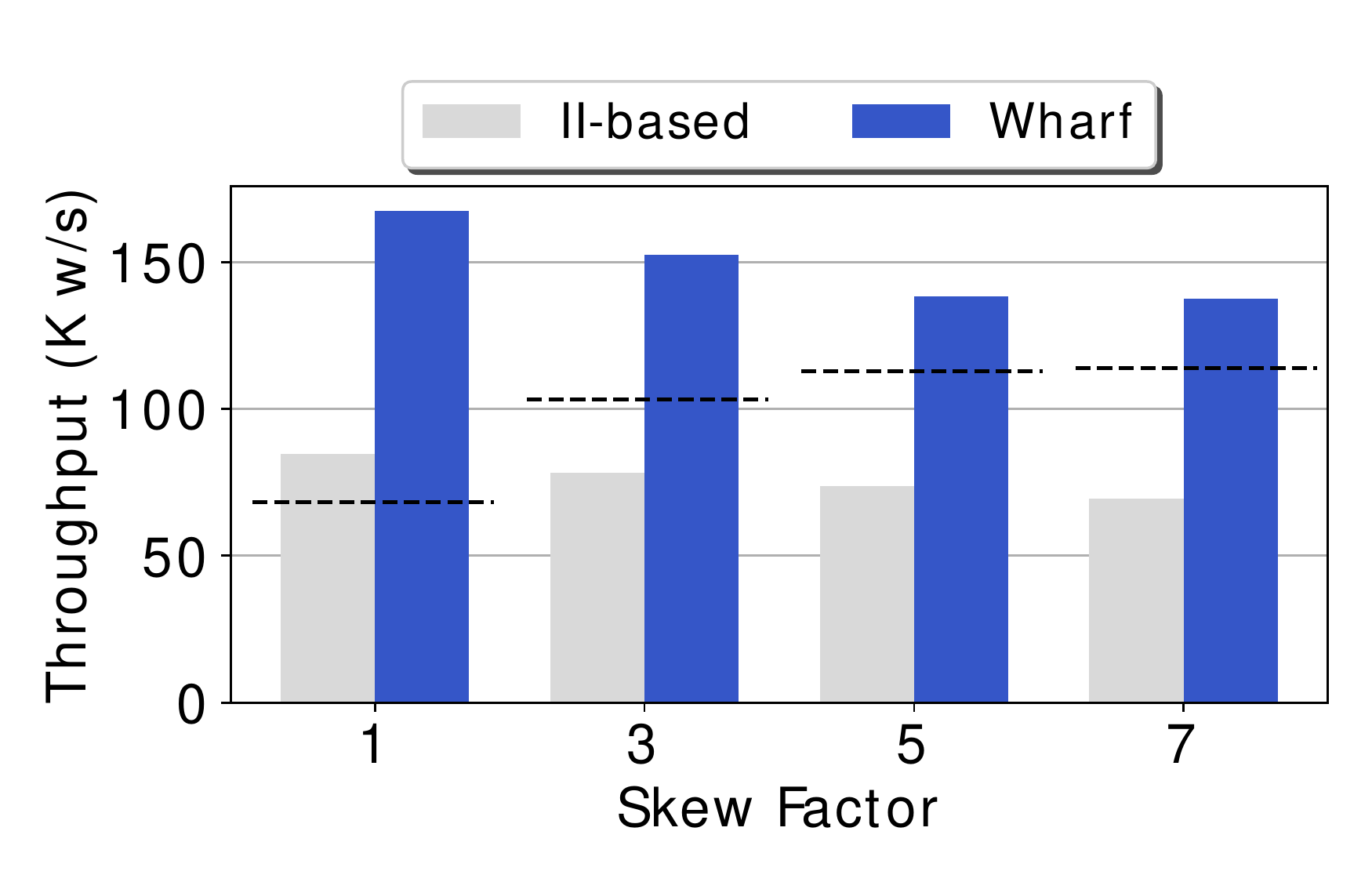}
    \caption{Throughput}
    \label{fig:sg-throughput}
  \end{subfigure}
  \begin{subfigure}[t]{0.23\textwidth}
    \includegraphics[width=\textwidth]{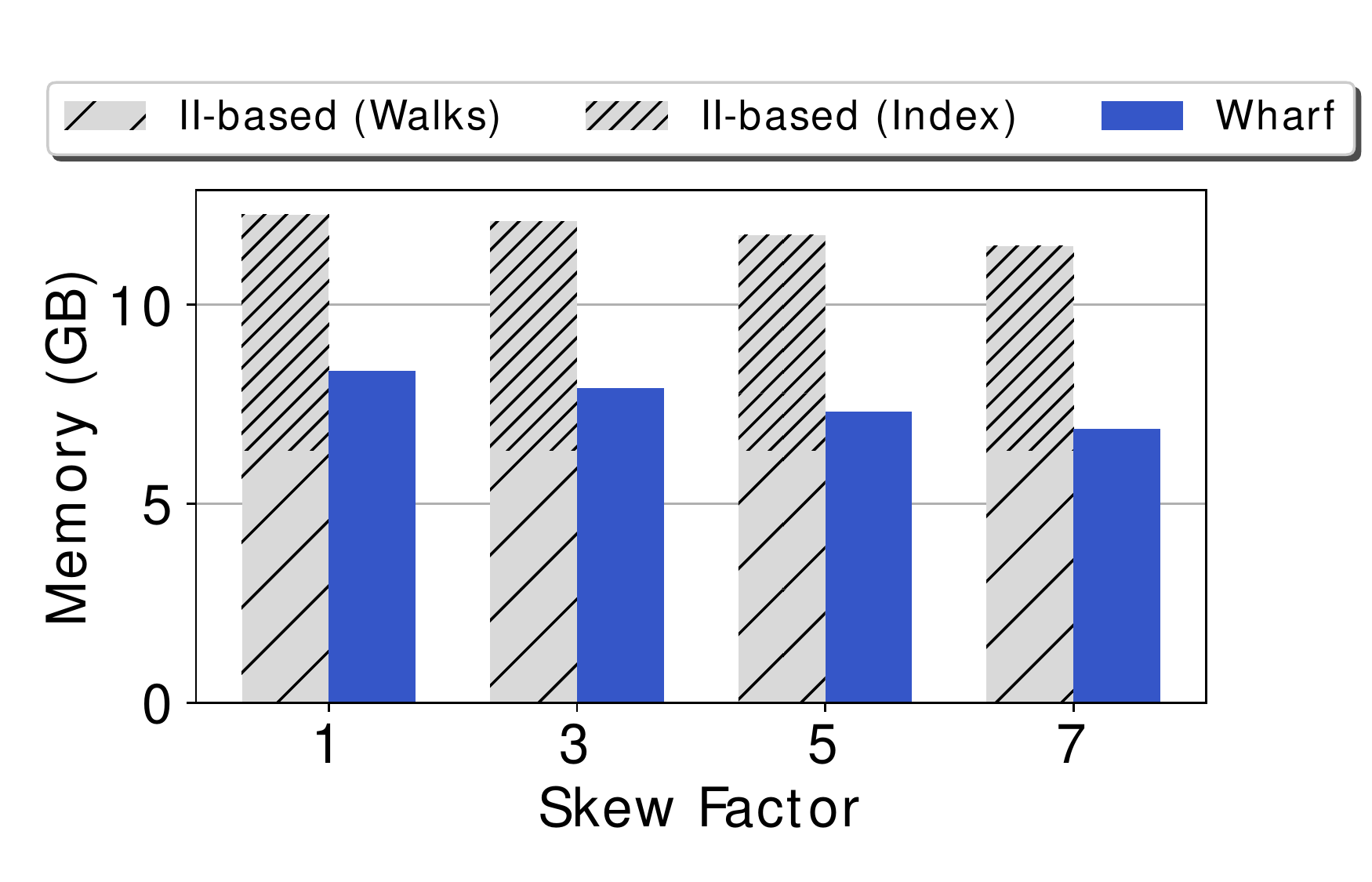}
    \caption{Memory footprint}
    \label{fig:sg-memory}
  \end{subfigure}
    \vspace{-0.2cm}
    \caption{Performance and space on the skewed $sg$-graphs.}
  \label{fig:sg-throughput-latency-memory}
  \vspace{-0.4cm}
\end{figure}

Figure~\ref{fig:sg-memory} shows the memory footprint.
We see that the higher the skew of the input graph the less space required to store the walks. 
This happens because a small number of vertices has an extremely high degree and appear many times in the majority of the random walks.
Therefore, the difference encoding in each chunk of walk-trees, in which \wharf stores the vertex ids of the walks, achieves better compression as the ids in a chunk mostly belong to high degree nodes and their neighbouring vertices. 
In contrast, II-based needs constant space for storing the walk sequences, while the inverted index space decreases when the skew increases but not as drastically as \wharf. 
Specifically, the memory footprint, between skew factors $s=1$ and $s=7$, drops by $\sim\!\!17.6\%$ in \wharf while only by $\sim\!\!6.4\%$ in II-based. 
In conclusion,\emph{ \wharf is robust to skew in terms of both throughput and space efficiency}.


\subsection{In-Depth Study}
\label{subsec:in-depth}



\noindent{\bf Range vs. Simple Search.} We proceed in exploring the benefits of the output-sensitive \textsc{FindNext} range search algorithm that \wharf uses when seeking a specific walk triplet inside a walk-tree.
As baseline, we disabled this range search and leave \wharf with the simple search that checks triplets by  scanning the entire walk-trees.


Figure~\ref{fig:throughput-node2vec-realgraphs-bsize10000-rvsssearch} illustrates the throughput improvement factor (IF) that \wharf's range search achieves when running node2vec, with parameters $p = 0.5$ and $q=2$, on all the real graphs.
We observe that the range search feature leads up to $3\times$ higher throughput than the baseline. 
Note that in smaller graphs, such as \textit{com-YouTube}, the gains from range search are higher than in larger graphs, such as \textit{soc-Livejournal}. This is because smaller graphs have less vertices in their walk-trees, which makes the range search faster as the constructed ranges are smaller`.
Furthermore, Figure~\ref{fig:throughput-node2vec-orkut-bsizevarying-rvsssearch} shows the throughput IF
for node2vec when varying the batch sizes of edge insertions on the large graph \textit{soc-LiveJournal}, where the walk-trees contain a huge amount of walk triplets.  
We get similar results for \textit{com-Orkut} but we omit them due to space limitations.
In this case, range search enables \wharf to achieve on average $\sim\!\!1.7\times$ higher throughput than the simple search. 
Note that the space overhead for the $\{min, max\}$ bounds in each walk-tree (necessary for our search range search) is negligible (less than $1\%$).
We thus conclude that \emph{the range search technique significantly contributes to the high throughput of \wharf with negligible space overhead}.

\begin{figure}[t]
  \begin{subfigure}[t]{0.23\textwidth}
    \includegraphics[width=\textwidth]{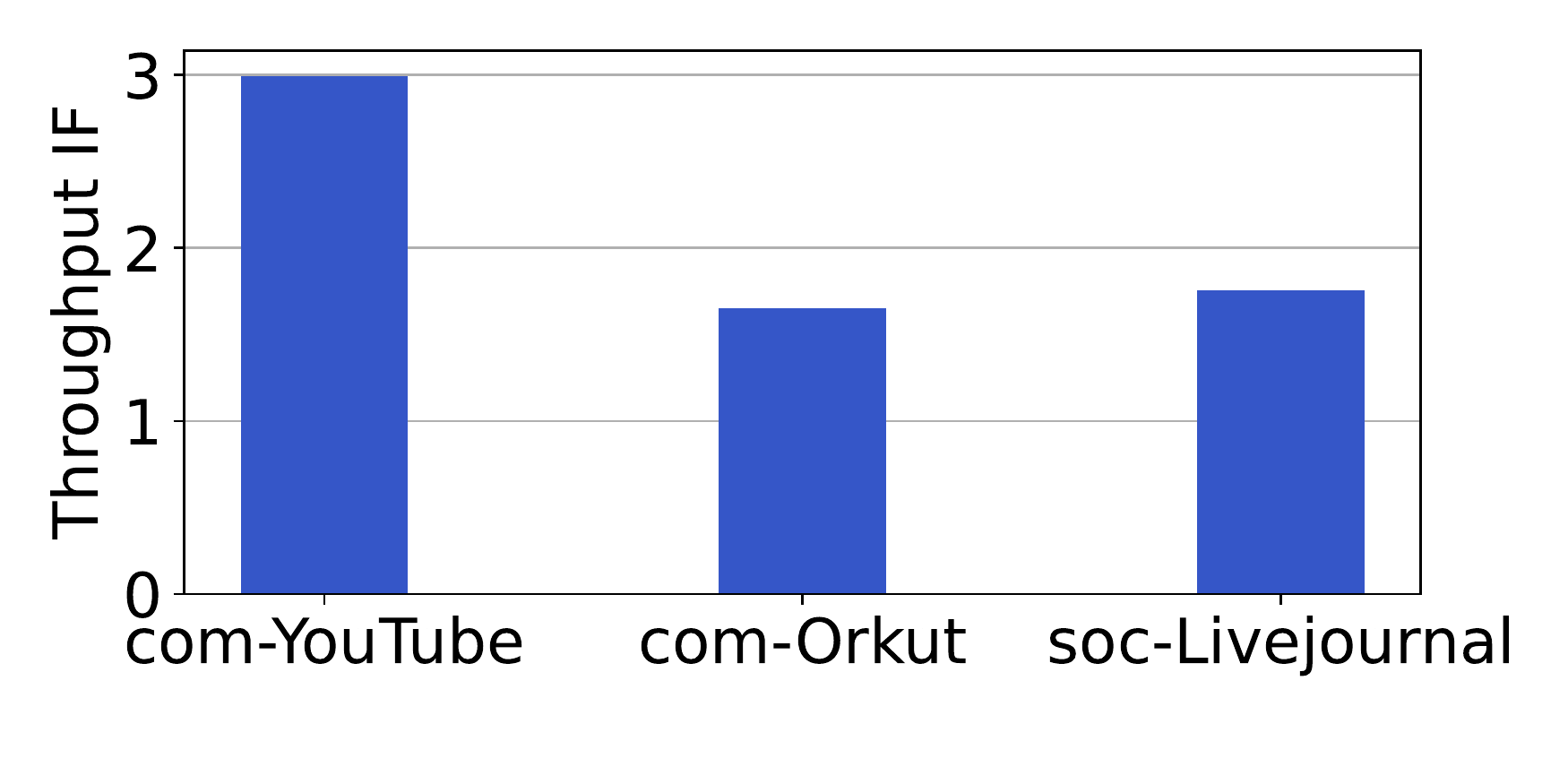}
    \caption{Real graphs, inserting $10K$ edges}
    \label{fig:throughput-node2vec-realgraphs-bsize10000-rvsssearch}
  \end{subfigure}
  \begin{subfigure}[t]{0.23\textwidth}
    \includegraphics[width=\textwidth]{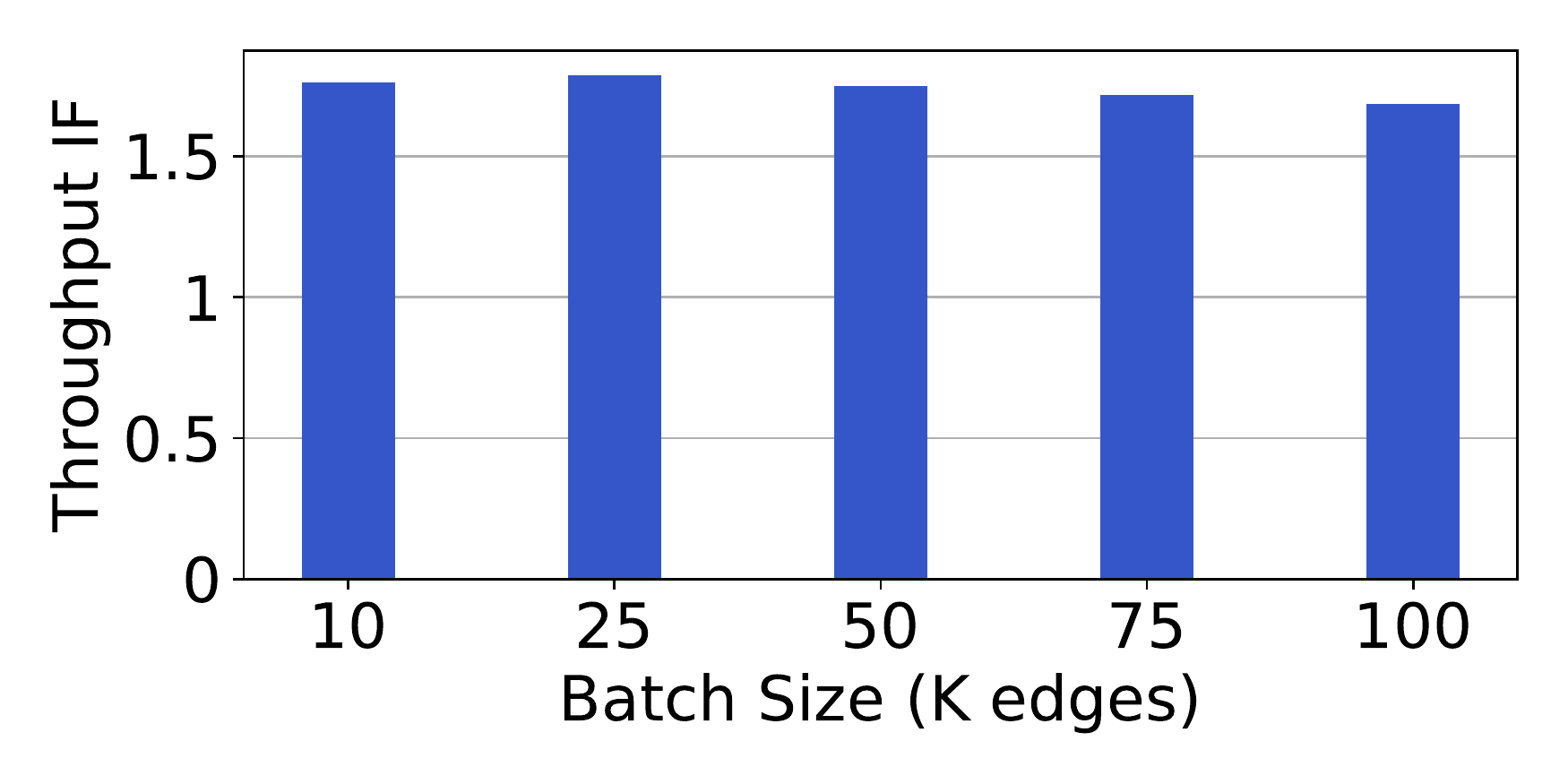}
    \caption{\textit{Livejournal}, varying batch size}
    \label{fig:throughput-node2vec-orkut-bsizevarying-rvsssearch}
  \end{subfigure}
  \vspace{-0.2cm}
  \caption{Throughput improvement factor (IF) of \wharf when using range over simple search for node2vec.}
  \label{fig:livejournal-sVSrsearchy}
  \vspace{-0.4cm}
\end{figure}

\vspace{0.1cm}
\noindent{\bf Benefits of Difference Enconding.} We now explore the impact of difference encoding (DE) on \wharf's throughput and memory footprint.
For this, we disabled the DE in \wharf and used the resulting variant as our baseline.
We inserted 10 batches of $10$K edges and measured the average throughput and the memory footprint after the merge operation for all our real graphs. We observed that \wharf needs up to $1.4\times$ less space to store the walks than when not using DE. 
It also achieves quite a similar throughput for all real graphs that is within $5\%$ as the one achieved when not having DE. 
For instance, on the LiveJournal dataset the throughput with DE is $\approx207.4$K walks/second, whereas without it is $\approx214.6$K walks/second. 
We thus conclude that \textit{compression via difference encoding helps improve the memory footprint of \wharf, however, its performance does not stem from compression but from our proposed techniques.}

\vspace{0.1cm}
\noindent{\bf Vertex Id Distribution.} We now explore the effect that the vertex id distribution has to the space needed to store random walks. 
Specifically, we used our $er$-$18$ graph that has $262,144$ vertices as the initial graph $G_1$. 
In $G_1$ the vertex ids are fully clustered, i.e.,~they range from $0$ to $262,143$, as produced by TrillionG~\cite{sigmod17-trillionG}. 
From $G_1$, we produced $G_{2-x20}$ by multiplying the vertex ids by $20$ to make the ids of the graph non-clustered, yet ordered.  
Additionally, we created two more graphs out of $G_1$, namely, $G_{3-r1M}$ and $G_{4-r5M}$ where we reassigned a unique random id to each vertex drawn from the $[0-1M)$ and $[0-5M)$ ranges, respectively. 
We generated a  walk corpus of $10$ walks per vertex of length $80$ each for all four aforementioned graphs. 
We observed that the space that \wharf needs to store the walks for $G_1$ is $1.553$ GB, for $G_{2-x20}$ is $1.547$ GB, for $G_{3-r1M}$ is $1.553$ GB, and for $G_{4-r5M}$ is $1.547$ GB.
This shows that the delta encoding scheme ensures that \emph{the space \wharf needs to store the walks is not affected by the vertex id distribution}.

\subsection{Effectiveness of Downstream Tasks}
\label{subsec:accuracy}


Lastly, we measure the accuracy of a downstream vertex classification task and a Personalized PageRank (PPR) task to show the effectiveness of \wharf to maintain random walks. 
For the former, we implemented an {\em incremental learning} approach that uses \wharf: it builds a predictive model, after each graph update (snapshot), from embeddings that use the walks produced by \wharf. 
For the latter, we implemented~\cite{vldb10-goel} in \wharf for producing and updating the walks used for approximating PPR scores.
We used the \textit{\textbf{Cora}} dataset for both tasks, which is an undirected citation graph of $2,708$ vertices~\cite{arxiv19-incremental-node2vec} and $5,429$ edges where each vertex (i.e., paper) belongs to one of 7 categories, such as ``neural networks''.
We ingested at every timestep a new batch of edges and incorporated them into the graph.
We fixed the batch size to $250$.

\header{Vertex classification}
%
%
As baselines for the vertex classification task, we considered (i)~the {\em ideal learning} case, i.e.,~learning a new model at every single snapshot, and (ii)~the {\em periodic learning} case, i.e.,~learning a new model every $k$ snapshots (we use $k =5,10$). 
Both ideal and periodic learn embeddings using random walks computed from the scratch.
For the incremental learning case (which is based on \wharf), 
\wharf updates the walks after each batch insertion so that they remain statistically indistinguishable. We, then, incrementally refine 
the embeddings 
using \textit{yskip}~\cite{emnlp17-yskip} 
with default DeepWalk parameters (i.e.,~$n_w = 10$, $l = 80$), and trained $128$-sized embedding vectors.
We used LogisticRegression for the 
classification and report the average $F1$ score of three runs. 
Figure~\ref{fig:fig:accuracy-cora-deepwalk} 
shows
that the incremental learning achieves overall the same accuracy as the ideal learning, demonstrating the high effectiveness of \wharf to maintain high-quality random walks.
Note that in the periodic learning scenario the accuracy drops significantly in between the snapshots where re-training takes place. 
	The larger the period (e.g.,~$k=10$) the lower the accuracy stays.
	These results are aligned with Figure~\ref{fig:motivation-ge}, where we witnessed a large decrease in the accuracy of a link prediction task if we do not keep the embeddings up-to-date. 
	Such drops in accuracy can have large negative impact in the underlying ML tasks, especially for high-stakes applications, such as fraud detection.

We can then also conclude that having statistically indistinguishable random walks is crucial.

\header{Personalized pagerank}
For the PPR task, we considered a \textit{static} variant of~\cite{vldb10-goel} as baseline, which reuses the existing random walks instead of updating only the affected walks at every snapshot.
We generated $10$ walks per vertex with a restart probability of $0.2$, and report the Symmetric Mean Average Error (SMAPE) between~\cite{vldb10-goel} and the static variant (Figure~\ref{fig:fig:accuracy-cora-ppr}).
	We observe that as more graph snapshots arrive the SMAPE constantly increases. 
	In fact, even after the first snapshot arrives, the error is already greater than 40\%. 
	These results are aligned with Figure~\ref{fig:motivation-ppr}, yet, because of the smaller batch sizes we used here, 
	the error gradually increases. 
These results confirm the vertex classification results: keeping random walks statistically indistinguishable is crucial for the downstream tasks.

\begin{figure}[t!]
   \vspace{-0.4cm}
   \begin{subfigure}[t]{0.23\textwidth}
    \includegraphics[width=\linewidth]{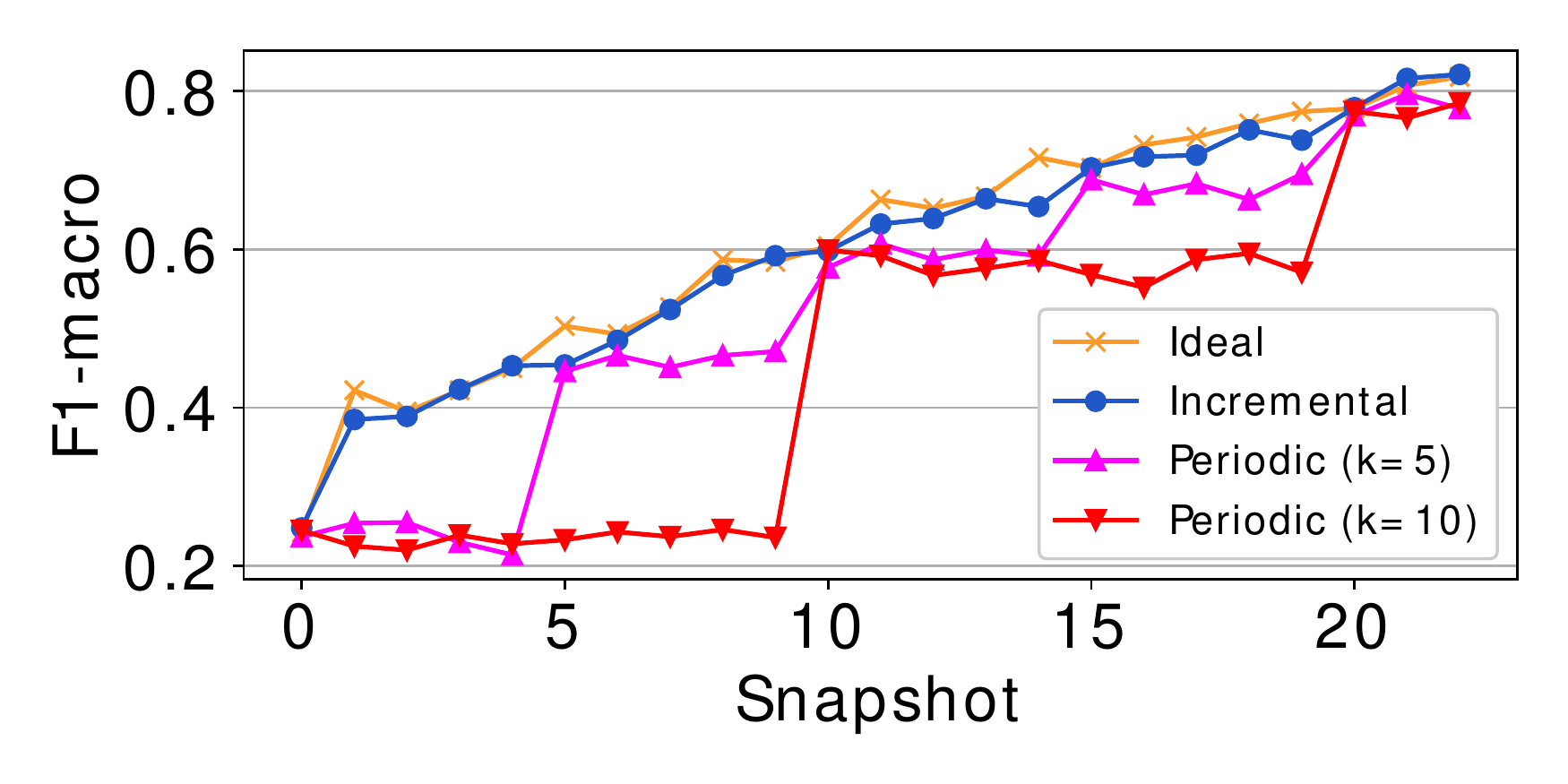}
    \caption{Vertex Classification (DeepWalk)}
    \label{fig:fig:accuracy-cora-deepwalk}
  \end{subfigure}
  \begin{subfigure}[t]{0.23\textwidth}
    \includegraphics[width=\linewidth]{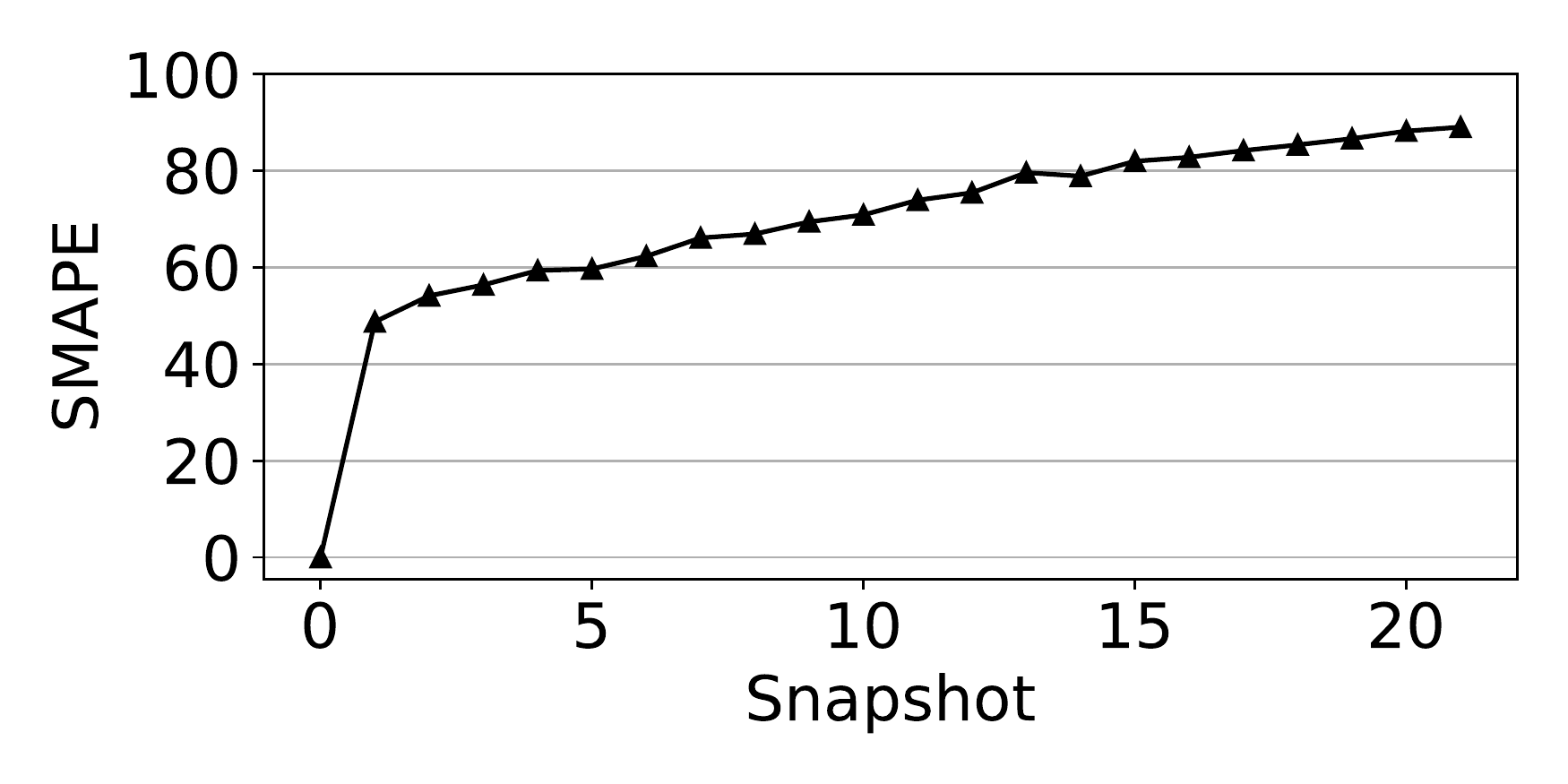}
    \caption{Personalized PageRank}
    \label{fig:fig:accuracy-cora-ppr}
  \end{subfigure}
 \vspace{-0.2cm}
 \caption{Accuracy of 
 downstream tasks 
 on \textit{Cora}.}  
  \label{fig:livejournal-sVSrsearchy}
  \vspace{-0.4cm}
\end{figure}

\section{Related Work}
\label{sec:related-work}

\noindent \textbf{Streaming Graph Systems.} Aspen~\cite{pldi19-aspen} is a state-of-the-art streaming graph processing framework that is built for single-node execution and uses the novel $C$-tree data structure for representing the graph in a compressed manner, while provably achieving very low latency when processing streaming graph workloads. 
Teceo~\cite{vldb21-teceo} is a recently introduced system that can store and analyze dynamic graphs in main-memory with transactional capabilities. 
Teceo presents a data structure that maintains a dynamic graph. This structure is based on sparse arrays and a fat tree where the graph is kept at the end. 
The system does not offer any compression primitives, which is highly prohibitive for ML applications we tackle, such as graph embeddings.  
DZig~\cite{eurosys21-dzig} is a quite recently proposed high-performance system that enables performant graph processing in the presence of sparsity and at the same time guarantees bulk-synchronous-parallel semantics. 
Tripoline~\cite{eurosys21-tripoline} is a system built on top of Aspen that evaluates queries without their a priori knowledge (e.g.,~BFS) in an incremental way. 
It generalizes the incremental graph processing by using graph triangle inequalities and natively supports high-throughput graph updates of low cost. 
Other well-known graph streaming systems exists as well~\cite{icde15-llama, hpec12-stinger} but none of them deals with random walks.

\noindent\textbf{Random Walk Systems.} DrukarMob~\cite{recsys13-drunkardmob} is a single machine out-of-core system for calculating first-order (uniform) random walks, but it is not main memory-based. 
Shao et al.~\cite{sigmod20-lei} propose a memory-aware 
random walk framework that dynamically assigns different sampling methods to minimize the cost of random walks within the memory budget. 
KnightKing~\cite{sosp19-knightking} is a distributed system for computing random walks on static graphs based on a walker-centric computation model, which is able to express various walk algorithms. 
ThunderRW~\cite{vldb21-thunderrw} is a single-node system that conducts in-memory random walks by devising a step-centric computation model, which hides memory access latency by executing multiple queries in an alternating manner. 
In contrast to the above systems, \wharf is designed for streaming graphs, does not impose any main memory constraints, and 
supports 
walks of 
any order.

\noindent \textbf{Dynamic GRL.} 
Barros et al.~\cite{survey21-barros} categorize random walk-based GRL methods on dynamic graphs into: (i) random walks on snapshots, (ii) evolving random walks, and (iii) temporal random walks. 
In the first category, random walks are re-computed at every snapshot so that embeddings are learned from scratch.
\wharf falls into the second category, where random walks are not recomputed from scratch after every graph update, but they are updated along with the embeddings of the affected vertices~\cite{cnta19-evonrl, arxiv19-incremental-node2vec, bigdata18-madhavi}.
Yet, \wharf stores, indexes, and updates the walks more efficiently than the competitor approaches as we show in the experimental section. 
The third category, contains methods that consider temporal walks, i.e.,~the temporal flux is respected during their creation~\cite{www18-ctdne, cikm20-tdgraphembed, asonam18-winter}. However, \wharf does not consider the temporal aspect of edges. 


\noindent\textbf{Dynamic PageRank.}
Bahmani et al.~\cite{vldb10-goel} present a method for calculating Personalized PageRank (PPR) scores via precomputing and storing 
random walks for each node in the graph. 
The stored walks are not indexed leading to a full scan of walks for each incoming edge update. 
In contrast, \wharf's structure offers an 
 index on the walks which leads to faster walk updates.
Mo et. al.~\cite{cikm21-agent} present the Agenda framework for fast and robust Single Source PPR queries on evolving graphs. 
Yet, Agenda does not store the entire random walks in main memory, which might require recomputing walks from the scratch.
Jiang et al.~\cite{vldb17-reads} propose a walk indexing scheme for calculating SimRank in dynamic graphs. However, the authors designed an index specifically for simrank-aware walks and hence one cannot use it out-of-the-box in our setting. 

\section{Conclusion}
\label{sec:conclusion}

We tackled the problem of computing and maintaining random walks up-to-date in streaming graphs ({\em streaming random walks}).
We presented \wharf, a system that produces and updates random walks in a streaming fashion as well as stores them succinctly.
\wharf represents walks concisely by coupling compressed purely functional binary trees and pairing functions and updates the walks efficiently by pruning the search space leveraging the ordering properties of pairing functions.
Our experiments show that \wharf can incrementally update 
walks with up to $2.6\times$ higher throughput and up to $2\times$ lower latency than inverted index-based baselines.  

\bibliographystyle{ACM-Reference-Format}
\bibliography{sample}



\begin{appendix}

\section{Merge Policies}
\label{appendix:merge-policies}
Let us now briefly discuss the different policies for merging walk-trees to evict obsolete walk-triplets from the walk corpus. 
In all the experiments we showed in Section~\ref{sec:experiments}, \wharf used the on-demand policy, where the merge 
took place at the last batch. 
We have also experimented with a different policy of merging after each batch (eager policy). 
In general, we observed that \wharf achieved on average $\sim\!\!1.8\times$ higher throughput when using the (default) on-demand policy than when using the eager one, and $\sim\!\!2.6\times$ higher than II-based. 
Certainly, this comes at the cost of the memory footprint as keeping the different walk-tree versions increases the space demands. 
For instance, the on-demand policy leads to a cumulative memory footprint $1.6\times$ larger after inserting $4$ batches compared to the constant memory footprint of the eager policy, as shown in Figure~\ref{fig:merge-policy}. 
Note that in the illustrated experiment we
insert $5$ batches of $25K$ edges on \textit{com-Orkut}. 
Thus, \textit{there exists a throughput-memory trade-off: one can achieve significantly higher throughput at the price of a higher cumulative memory footprint by merging less frequently.}   
We leave a deeper study of this trade-off  for future work. 
Specifically, it would be particularly interesting to see how one may interleave the on-demand with the eager policy depending on the throughput Service Licence Agreements (SLAs) and the memory capabilities of the system \wharf is running on. 

\begin{figure}[t]
 \begin{subfigure}[t]{0.23\textwidth} 
   \includegraphics[width=\textwidth]{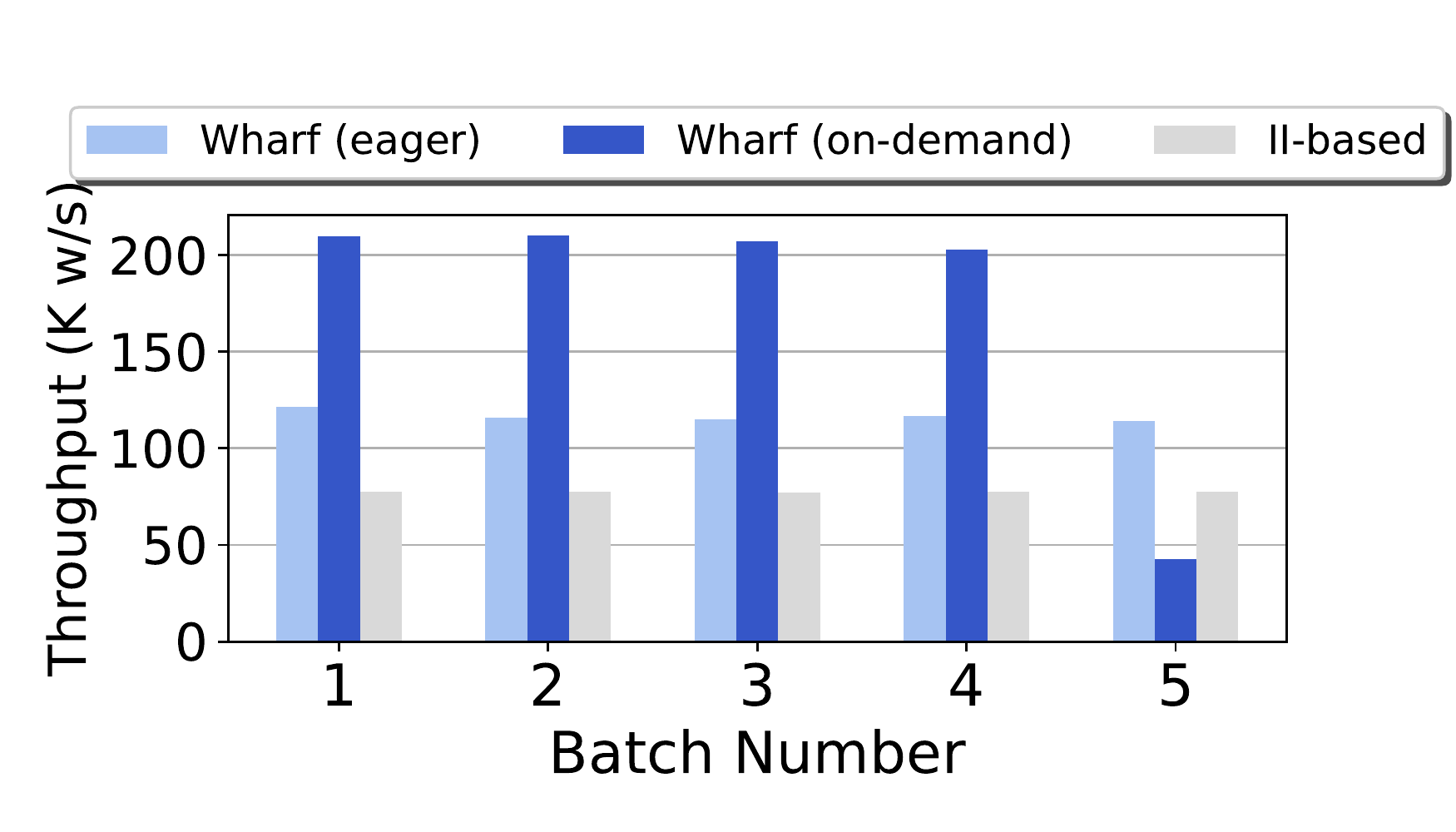}
   \caption{Throughput}
   \label{fig:throughput-merge-policy}
 \end{subfigure}
 \begin{subfigure}[t]{0.23\textwidth} 
   \includegraphics[width=\textwidth]{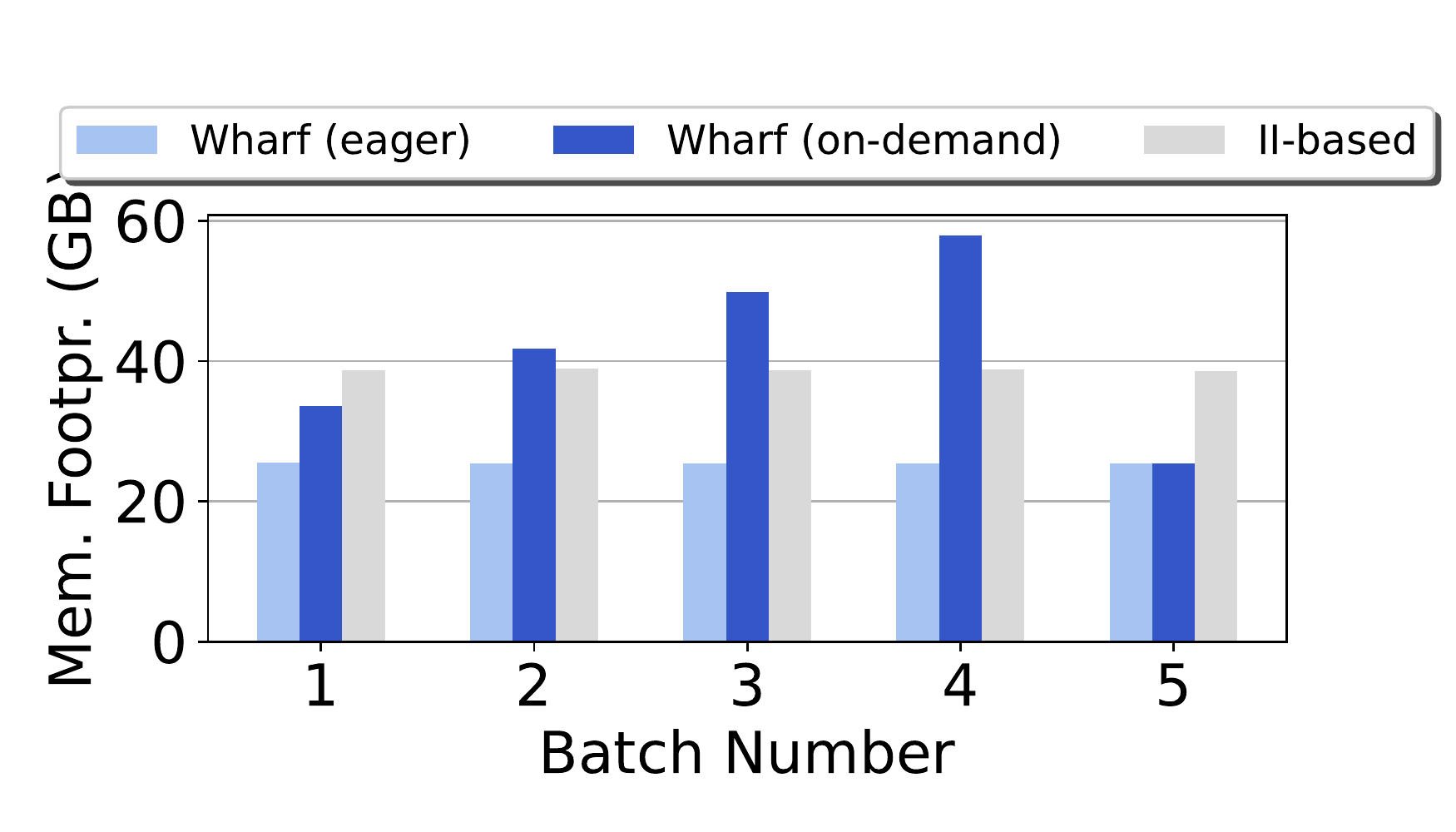}    
   \caption{Memory footprint}
   \label{fig:memory-merge-policy}
 \end{subfigure}
 
 \caption{Comparison between 
 the on-demand and eager merge policies while inserting $5$ batches of $25K$ edges on \textit{com-Orkut}.}
 \label{fig:merge-policy}
\end{figure}

\section{Distribution of Minimum Affected Position in Each Affected Walk}
\label{appendix:pmin-distribution}

Let us now shed some more light on how the batch sizes influences \wharf's throughput.  
Figure~\ref{fig:distribution-pmin} illustrates the distribution of minimum affected positions, i.e.,~$p_{min}$, in the affected walks for different batch sizes in \textit{com-Orkut}. 
We observe that as the batch size increases not only does the number of affected walks increases, but also more and more 
walks are affected at an earlier point of their walk sequence. 
This leads to more work for updating the random walks. 
Thus, we conclude that \textit{not only the number of affected walks that influences the achieved throughput, but also the minimum affected positions of the affected walks.}

\section{Throughput varying Walk Length}
\label{appendix:throughput-varying-l}

Here, we present another experiment that illustrates the overall performance of \wharf and our baselines in terms of throughput when we vary the length of the random walks we generate and maintain. 
Specifically, we inserted $10$ batches of $10K$ edges in the \textit{soc-Livejournal} graph aiming to measure the average throughput of walk updates for walk length values $l \in \{5, 10, 15, 40, 80, 120\}$. 
In Figure~\ref{fig:throughput-varying-l} we show the achieved throughput for the aforementioned walk length values. 
As we see, the throughput of \wharf as well as of its competitors is decreasing as the walk length increases. 
The performance gap of II-based compared to \wharf is bigger for greater
walk lengths, i.e., for $l \in \{40, 80, 120\}$, whereas it is smaller for smaller walk lengths, i.e., $l \in \{5, 10, 15\}$. 
Yet, we can trivially see that \textit{\wharf always retains a greater throughput than both II-based and Tree-based.}

\begin{figure}[t]
 \begin{subfigure}[t]{0.23\textwidth} 
   \includegraphics[width=\textwidth]{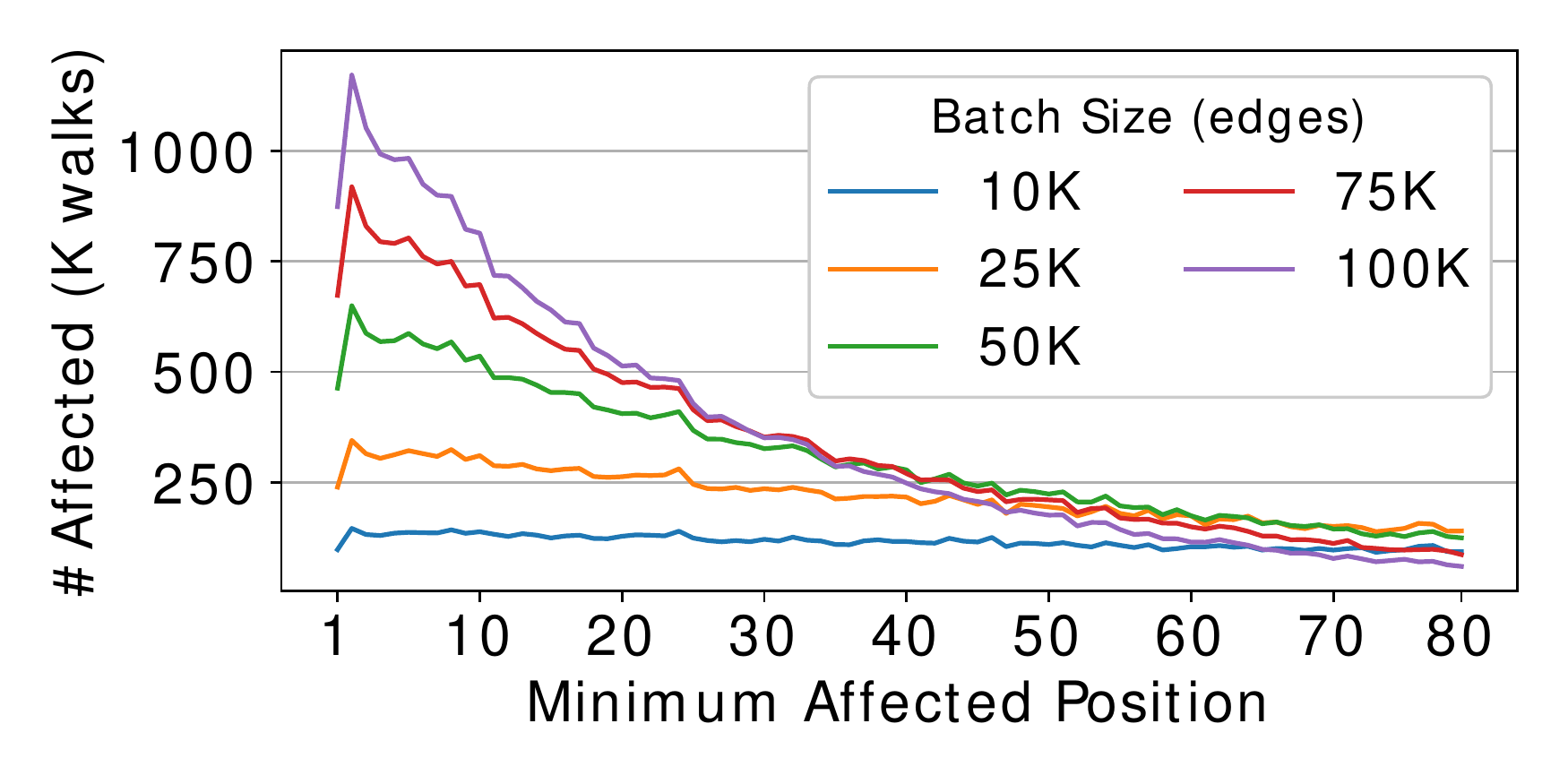}
   \caption{Distribution of minimum affected position in each affected walk}
   \label{fig:distribution-pmin}
 \end{subfigure}
 \begin{subfigure}[t]{0.23\textwidth} 
   \includegraphics[width=\textwidth]{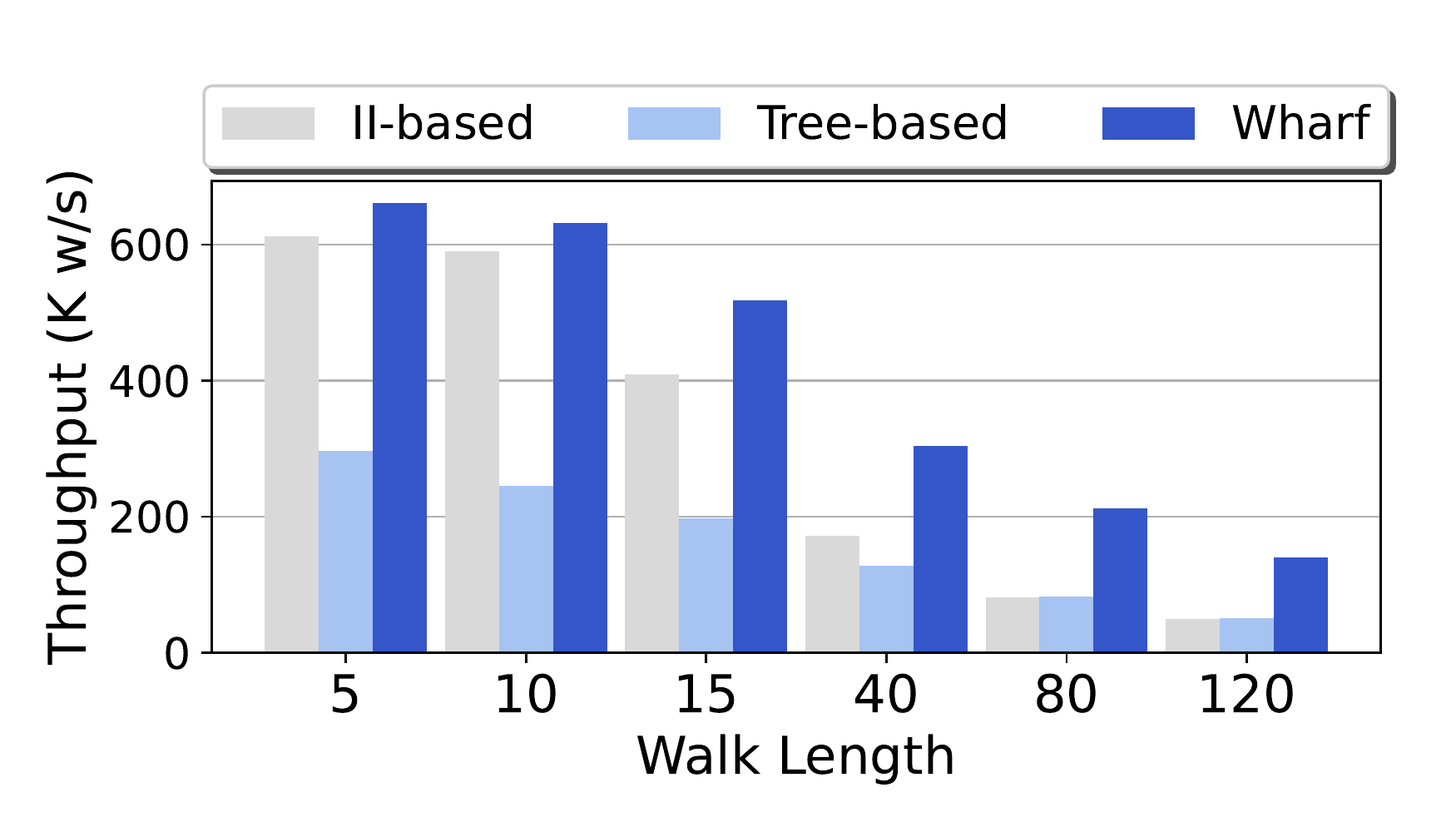}    
   \caption{Throughput varying $l$}
   \label{fig:throughput-varying-l}
 \end{subfigure}
 
 \caption{Effect of the batch size and of the walk length.}
 \label{fig:throughput-varying-l-batch-size-effect}
\end{figure}

\end{appendix}

\end{document}